\newcommand{\floor}[1]{\left\lfloor #1 \right\rfloor}
\newcommand{\ceil}[1]{\left\lceil #1 \right\rceil}
\newcommand{\donotshow}[1]{}
\newcommand{\ignore}[1]{}
\providecommand{\CC}{C\raisebox{.08ex}{\hbox{\tt ++}}}
\newcommand {\abs}[1] {| #1 |}
\newlength{\mysetspacing}
\providecommand{\sset}[1]{\{\hspace{\mysetspacing} #1 \hspace{\mysetspacing}\}}
\providecommand{\set}[2]{ \left\{\hspace{\mysetspacing} #1 \mbox{{\rm ; }} #2 \hspace{\mysetspacing} \right\}  }
\newcommand{\mbegin}{\{\ \ }
\newcommand{\mend}{\}}
\newlength{\mleftindent}
\newlength{\mindent}
\newlength{\mboxwidth}
\newcommand{\mincrement}{\addtolength{\mboxwidth}{-\mindent}}
\newcommand{\mdecrement}{\addtolength{\mboxwidth}{\mindent}}
\newlength{\preprogramskip}
\newlength{\postprogramskip}
\newlength{\mexpwidth}
\newlength{\mexpindent}
\newcommand{\indentafterkeyword}{\hspace*{0.5em}}
\newcommand{\mslifelse}[3]  %if is short else is long
{\setlength{\mexpwidth}{\mboxwidth}%
\settowidth{\mexpindent}{{\bf if\indentafterkeyword}}%
\addtolength{\mexpwidth}{-\mexpindent}%
{\bf if\indentafterkeyword}\parbox[t]{\mexpwidth}{#1}\\
\mincrement \mbegin \parbox[t]{\mboxwidth}{#2 \mend} \mdecrement \\
{\bf else} \\
\mincrement \mbegin \parbox[t]{\mboxwidth}{#3}\\
\mend \mdecrement
}
\newlength{\proofpostskipamount}\newlength{\proofpreskipamount}
\newenvironment{proof}%
               {\par\vspace{\proofpreskipamount}\noindent{\bf Proof:}\hspace{0.5em}}% 0.5 before
               {\nopagebreak%
                \strut\nopagebreak%
                \hspace{\fill}\qed\par\vspace{\proofpostskipamount}\noindent}
\par\vspace{0.5ex}\noindent{\bf Proof #1:}\hspace{0.5em}}%
\newtheorem{theorem}{Theorem}
\newtheorem{lemma}{Lemma}
\providecommand{\qed}{\rule[-0.2ex]{0.3em}{1.4ex}}
\newlength{\mydefwidth}
\newlength{\mytextwidth}
\newcommand{\myurl}[1]{{\footnotesize \url{#1}}}
\newcommand{\NSW}{\mathit{NSW}}
\newcommand{\Ahalf}{A_{\sfrac{1}{2}}}\newcommand{\Ohalf}{O_{\sfrac{1}{2}}}
\newcommand{\half}{{\sfrac{1}{2}}}
\newcommand{\mhalf}{m_{\half}}
\newcommand{\nhalf}{n_{\half}}
\renewcommand{\sset}[1]{\{ #1 \}}
\newcommand{\hdeg}{\mathit{hdeg}}
\newcommand{\NN}{\mathbf{N}}
\newcommand{\AH}{A^H}
\newcommand{\OH}{O^H}
\begin{document}
\title{Maximizing Nash Social Welfare in 2-Value Instances: A Simpler Proof for the Half-Integer Case}
\author{Kurt Mehlhorn}

\maketitle

\begin{abstract} A set of $m$ indivisible goods is to be allocated to a set of $n$ agents. Each agent $i$ has an additive valuation function $v_i$ over goods. The value of a good $g$ for agent $i$ is either $1$ or $s$, where $s$ is a fixed rational number greater than one, and the value of a bundle of goods is the sum of the values of the goods in the bundle. An \emph{allocation} $X$ is a partition of the goods into bundles $X_1$, \ldots, $X_n$, one for each agent. The \emph{Nash Social Welfare} ($\NSW$) of an allocation $X$ is defined as
\[ \NSW(X) = \left( \prod_i v_i(X_i) \right)^{\sfrac{1}{n}}.\]
The \emph{$\NSW$-allocation} maximizes the Nash Social Welfare. In~\cite{NSW-twovalues-halfinteger} it was shown that the $\NSW$-allocation can be computed in polynomial time, if $s$ is an integer or a half-integer, and that the problem is NP-complete otherwise. The proof for the half-integer case is quite involved. In this note we give a simpler and shorter proof. \end{abstract}

\section{Introduction}

A set of $m$ indivisible goods is to be allocated to a set of $n$ agents. Each agent $i$ has an additive valuation function $v_i$ over goods. The value of a good $g$ for agent $i$ is either $1$ or $s$, i.e., $v_i(g) \in \sset{1,s}$; here $s$ is a fixed rational number greater than one. 
The value of a bundle $X_i$ of goods for $i$ is the sum of the values of the goods in the bundle, i.e., $v_i(X_i) = \sum_{g \in X_i} v_i(g)$. We call a good $g$ \emph{heavy} if $v_i(g) = s$ for some $i$, and \emph{light} otherwise, i.e., $v_i(g) = 1$ for all $i$. An \emph{allocation} $X$ is a partition of the goods into bundles $X_1$, \ldots, $X_n$, one for each agent, i.e., $\cup_i X_i = [m]$ and $X_i \cap X_j = \emptyset$ for $i \not= j$. The \emph{Nash Social Welfare} (NSW) of an allocation $X$ is defined as
\[ \NSW(X) = \left( \prod_i v_i(X_i) \right)^{\sfrac{1}{n}}.\]
The \emph{$\NSW$-allocation} maximizes the Nash Social Welfare. In~\cite{NSW-twovalues-halfinteger} it was shown that the $\NSW$-allocation can be computed in polynomial time, if $s$ is an integer or a half-integer, and that the problem is NP-complete otherwise. The proof for the half-integer case is quite involved. In this note we give a simpler and shorter proof.

In~\cite{NSW-twovalues-halfinteger}, it was argued that for the case of $s$ being integral, the allocation of the heavy and the light goods can be considered separately, and that this is not the case when $s$ is half-integral. The following example is given. Consider two agents with identical valuations and $s = \sfrac{3}{2}$. There are two heavy goods and either two or three light goods. If there are two light goods, each agent should receive a heavy and a light good, and if there are three light goods, one agent should receive the two heavy goods and the other should receive the three light goods. So, the number of available light goods influences the allocation of the heavy goods. The authors conclude that it is necessary to consider heavy and light goods together throughout the algorithm. We give an algorithm that consists of two phases and considers the light goods only in the second phase. This is the first simplification.

The proof in~\cite{NSW-twovalues-halfinteger} is algorithmic, i.e., the paper introduces a collection of improvement rules and shows that their repeated application produces a $\NSW$-allocation. In contrast, we characterize the $\NSW$-allocation and discuss the algorithmics after the characterization. This is the second simplification.

Whenever an improvement rule is applied in~\cite{NSW-twovalues-halfinteger}, a detailed accounting of how the heavy and light goods are reassigned is given. In many situations we are able to replace the detailed accounting by a global argument. This is the third simplication. 

It is convenient to represent an instance of our problem as a bipartite graph. The two sides of the graph are the agents and the goods respectively, each agent is connected to each good, and the edge from agent $i$ to good $g$ is labelled as either heavy or light. An allocation $A$ is an assignment of the goods to the agents. \emph{In this paper, we restrict attention to allocations that assign every heavy good to an agent that considers it heavy}; the general case is treated in Section 4.4 of~\cite{NSW-twovalues-halfinteger} and we have no simplication to offer for this part. The restricted case is handled in Sections 4.1, 4.2, and 4.3 of~\cite{NSW-twovalues-halfinteger} that span a total of 25 pages. The current presentation has only half the length. 

\paragraph{Related Work:} In general, determining the optimal $\NSW$ is NP-complete. Good approximation algorithms were obtained by~\cite{Cole-Gkatzelis,CDGJMVY17,AnariGSS17,DBLP:conf/sigecom/BarmanKV18}. The current best factor is $e^{\sfrac{1}{e}} \approx 1,445$. For binary valuations, i.e., $v_i(g) \in \sset{0,1}$ for all $i$ and $g$, the $\NSW$-allocation can be determined in polynomial time~\cite{DBLP:conf/atal/BarmanKV18}.
For two-valued valuations, i.e., $v_i(g) \in \sset{1,s}$ for all $i$ and $g$, and $s$ a rational number greater than one, the $\NSW$-allocation can be computed in polynomial time if $s$ is an integer or a half-integer. Otherwise, the problem is NP-complete~\cite{NSW-twovalues-halfinteger}.

\section{The Algorithm}

The algorithm consists of two steps.
\begin{enumerate}
\item 
Determine the lexmin allocation of the heavy goods, i.e., push heavy goods towards smaller bundles as much as possible. Let $b_n \ge b_{n-1} \ge \ldots \ge b_1$ be the number of heavy goods allocated to the different agents sorted in decreasing order. Then $b_n$ is minimal among all allocations of the heavy goods, and given that $b_n$ has its minimal value, $b_{n-1}$ is minimal, and so on. In other words, the string $b_nb_{n-1}\ldots b_1$ is lexicographically minimal. 

\item Allocate the light goods greedily, i.e., allocate the light goods one by one and always add the next good to a bundle of smallest value. Let $x$ be the minimum value of any bundle in the resulting allocation. Call the bundles of value $x$, $x + \half$, and $x + 1$ the small bundles and let $N_s$ be the owners of the small bundles. Optimize the allocation of the small bundles, i.e., allocate the goods contained in the small bundles to the agents in $N_s$ so as to allocate heavy goods only to agents considering them heavy, give each bundle a value in $[x,x+1]$, and maximize the number of bundles of value $x + \half$.
\end{enumerate}
In the next subsections, we give more details.

\subsection{The LexMin Allocation of the Heavy Goods}

We need the concept of an alternating path. Let $A$ be an allocation, let $A^H$ be its restriction to the heavy edges, and let $E^H$ be the set of all heavy edges.  An alternating path (with respect to $A^H$ and $\bar{A} = E^H \setminus A^H$) uses alternatingly edges\footnote{We will later also consider alternating path with respect to $A$ and $O$ that use alternatingly edges in $A^H$ and $\OH$.} in $A^H$ and $\bar{A}$. The first and the last edge of the path have different types; one belongs to $A$ and one belongs to $\bar{A}$. The endpoint incident to the $A$-edge is called the $A$-endpoint of the path. The other endpoint is then the $\bar{A}$-endpoint. Augmentation of an alternating path to $A^H$ effectively moves a heavy good from the $A$-endpoint to the $\bar{A}$-endpoint.  

The lexmin allocation of the heavy goods is readily computed. Start with an arbitrary allocation $A^H$ of the heavy goods. As long as there is an alternating path starting with an $A$-edge from an agent $i$ to an agent $j$ owning at least two heavy goods less than $i$, augment the path. This decreases the number of heavy goods at $i$ by one and increases the number of heavy goods at $j$ by one.\footnote{Alternatively, one can use parametric flow~(\cite{ParametricFlow, Darwish-Mehlhorn}). Have an edge of capacity $s$ from an agent to each good that the agent considers heavy, and add a source and a sink. Have an edge of capacity $\lambda$ from the source to each agent, and an edge of capacity $s$ from each good to the sink. Determine max-flows for all values of $\lambda$. For $\lambda \le b_1$ the flow is $\lambda n$. For $b_1 \le \lambda \le b_2$ the max-flow is
$b_1 + \lambda(n-1)$, and so on.}

Let $\bar{k}$ (called $b_n$ above) be the maximum number of goods assigned to any agent. Define $R'_{\bar{k} + 1} = \emptyset$ and sets $R_\ell$ and $R'_\ell$ of agents for $\ell = \bar{k}$, $\bar{k} -1$, $\bar{k} - 2$, and so on. Let
\begin{align*}
  R_\ell & = \set{a}{\text{$a$ owns $\ell$ goods and does not belong to $R'_{\ell + 1}$}},\\
           R'_\ell &= \set{a}{\text{$a$ owns $\ell - 1$ goods and there is an alternating path starting with an $A$-edge from an agent in $R_\ell$ to $a$}}.
\end{align*}

\begin{lemma} For any $\ell$, agents in $R_\ell \cup R'_\ell$ own either $\ell$ or $\ell - 1$ heavy goods. Each of them could own $\ell$ goods via a transfer from another agent in the set. For any $k$, all heavy goods assigned to the agents in $\cup_{\ell \ge k} R_\ell \cup R'_\ell$ must be assigned to them.\end{lemma}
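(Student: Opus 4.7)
My plan is as follows. Parts~1 and~2 are essentially by construction. For part~1, $R_\ell$ contains only agents with $\ell$ heavy goods and $R'_\ell$ only agents with $\ell - 1$, directly from the definitions. For part~2, an agent $a \in R_\ell$ already owns $\ell$ goods in $A$, so no transfer is needed; for $a \in R'_\ell$, augmenting the alternating path provided by the definition of $R'_\ell$ transfers one heavy good from some $b \in R_\ell$ to $a$ while leaving all intermediate agents' counts unchanged, and both $b$ and $a$ lie in $R_\ell \cup R'_\ell$.

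For part~3 I would first isolate a closure property: writing $S_k := \bigcup_{\ell \ge k}(R_\ell \cup R'_\ell)$, any alternating path starting with an $A$-edge at an agent in $S_k$ ends at an agent in $S_k$. The key ingredient is a standard lexmin inequality — for any alternating path from $a$ to $b$ starting with an $A$-edge, writing $c(\cdot)$ for the number of heavy goods owned, one must have $c(b) \ge c(a) - 1$, since otherwise augmenting the path would strictly lex-decrease the sorted count vector $(b_n, \ldots, b_1)$, contradicting lexminness of $A^H$. Combined with the recursive definitions, this inequality pins the endpoint $b$ inside $S_k$: if $c(b) \ge k$ then $b$ already belongs to $R_{c(b)}$ or $R'_{c(b)+1}$; if $c(b) = k - 1$ then $b$ must be reachable from $R_k$ (by concatenating, when $a \in R'_\ell$, the defining path from $R_\ell$ to $a$ with the path from $a$ to $b$, then extracting a simple alternating path), whence $b \in R'_k$; any other value of $c(b)$ is excluded by applying the lexmin inequality to the concatenated path.

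Once closure is established, part~3 follows via a symmetric-difference argument. Suppose some alternative lexmin allocation $A^*$ of the heavy goods assigned a good $g \in H_k$ to an agent $a' \notin S_k$, and let $a \in S_k$ be the owner of $g$ in $A$. The connected component of $g$ in $A^H \triangle A^{*H}$ translates, by reading $A^*$-edges as $\bar{A}$-edges of the original heavy-goods graph, into an alternating path from $a$ to $a'$ starting with an $A$-edge — a direct contradiction to closure.

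The main obstacle is the rigor of the closure argument when $a \in R'_\ell$: there one must concatenate the defining alternating path from $R_\ell$ to $a$ with the continuation leaving $a$, and if intermediate vertices coincide, shortcut to a simple alternating path in the resulting walk while preserving the alternation of $A$-edges and $\bar{A}$-edges at every vertex. This is the standard bipartite augmenting-path manipulation, which I would invoke rather than rederive in detail.
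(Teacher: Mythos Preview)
Your argument is correct and follows the same route as the paper's (very terse) proof: establish that no alternating path starting with an $A$-edge can leave $S_k = \bigcup_{\ell \ge k}(R_\ell \cup R'_\ell)$, and then read off part~3. Two small slips are worth cleaning up. First, drop the qualifier ``lexmin'' on $A^*$: the claim is for \emph{arbitrary} allocations of the heavy goods, and your argument never uses lexmin-ness of $A^*$. Second, the symmetric-difference detour is heavier than needed and not quite right as stated (the connected component of $g$ in $A^H \triangle A^{*H}$ need not have $a$ and $a'$ as its endpoints); once closure is known, the length-two path $a \Aedge g \Bedge a'$ already contradicts it directly. As for the concatenation worry you flag, it dissolves if you note that alternating $A/\bar{A}$-paths here correspond to directed paths in the digraph on agents with an arc $u \to v$ whenever some good owned by $u$ is heavy for $v$; transitivity of reachability in that digraph replaces the shortcut argument.
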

\begin{proof} By construction, there is no alternating path starting with an $A$-edge from an agent in $\cup_{\ell \ge k} R_\ell \cup R'_\ell$ to an agent in $\cup_{\ell < k} R_\ell \cup R'_\ell$, and each agent in $R'_k$ can be reached from an agent in $R_k$ by such an alternating path. \end{proof}

We use $H$ to denote the lexmin allocation of the heavy goods.

% \begin{figure}
%  \centering \includegraphics[width=0.7\textwidth,clip, trim=0 500 0 70]{allocationLemma32.pdf}
% \end{figure}

\section{Assignment of the Light Goods and Optimization of the Small Bundles}\label{APC exists}

We add the light goods greedily to the lexmin allocation of the heavy goods, i.e., we add the light goods one by one and always to a bundle of minimum weight. Let $x$ be the minimum weight of a bundle after the addition of the light goods. Then only bundles of weight $x$, $x + \half$ and $x + 1$ contain light goods. The assignment of the light goods is somewhat arbitrary in the following sense: If a bundle of value $x + 1$ contains a light good, we may move this light good to a bundle of value $x$, thus interchanging the values of the bundles. 

%\textcolor{red}{The usage of $k$ is overloaded.}

Let $k_0$ be minimal such that $k_0 s > x + 1$. Then $(k_0 - 1) s \le x + 1$ and hence light goods may be added to the bundles in $R'_{k_0}$. At most $\floor{s}$ light goods are added to a bundle in $R'_{k_0}$; $\floor{s}$ goods are added if $k_0 s = x + \sfrac{3}{2}$ and the value of the bundle is raised to $x + 1$ by the addition of light goods.

At this point, we have bundles of value $x$, $x + \half$, $x + 1$, $k_0s$, $(k_0 + 1)s$, \ldots. We use $B$ to denote the allocation after the greedy addition of the light goods. Let us call the bundles of value at most $x + 1$ the \emph{small bundles}, the bundles of value $k_0s$ and more the \emph{large bundles}, let us write $N_s$ for the owners of the small bundles, and let $\Gamma(N_s)$ be the goods owned by them. The agents in $R'_{k_0}$ belong to $N_s$.

No allocation can assign more heavy goods to the agents in $N_s$ as $B$ does. Also $B$ allocates all light goods to $N_s$. So no allocation can assign more value to the agents in $N_s$ as $B$ does. If all bundles of $B$ in $N_s$ have value $x$, $B$ is clearly optimal. So we may assume that the average value of the bundles in $N_s$ lies strictly between $x$ and $x + 1$. 

We optimize the allocation of the small bundles by reassigning the goods contained in them so as to maximize the number of bundles of value $x + \half$, i.e., we consider an allocation of the goods in $\Gamma(N_s)$ to the agents in $N_s$ such that
\begin{enumerate}[(1)]
\item each heavy good in $\Gamma(N_s)$ is allocated to an agent in $N_s$ considering it heavy,
\item all bundles have value $x$, $x + \half$, or $x + 1$, and 
\item the number of bundles of value $x + \half$ is maximum;
\item subject to that the number of bundles of value $x + \half$ is maximum, the number of not-heavy-only bundles of value $x + 1$ is maximum.  
\end{enumerate}
We use $A$ to denote the allocation obtained after optimization of the small bundles. It consists of $H$ restricted to the bundles of value $k_0s$ and more plus the optimized allocation, call it $C$, to the agents in $N_s$. \emph{We will show that $A$ is optimal}.

Let $n_0$, $\nhalf$, and $n_1$ be the number of bundles in $C$ of value $x$, $x + \half$, and $x + 1$, respectively, let $S$ be their total value and let $n_s = \abs{N_s}$.  Then $n_s = n_0 + \nhalf + n_1$ and $S = n_0x + \nhalf (x + \half) + n_1 (x + 1)$ or $S - n_sx= \nhalf/2  + n_1$. Thus $n_1 = (S - n_sx) - \nhalf/2$ and $n_0 = n_s - \nhalf - n_1 = (n_s(x + 1) - S) - \nhalf/2$. The $\NSW$ of $C$ is therefore
\begin{equation}\label{large mhalf is good} x^{n_0} (x + \half)^{\nhalf} (x + 1)^{n_1} = x^{n_s(x+1) - S}(x + 1)^{S - n_sx} [(x + \half)^2/(x(x+1))]^{\nhalf/2}.
  \end{equation}
  Maximizing the number of bundles of value $x + \half$ in $C$ is therefore tantamount to maximizing the $\NSW$ of $C$. It is however not clear at this point that an optimal allocation must satisfy condition (2).

An efficient algorithm for constructing $C$ will be discussed in Section~\ref{Algorithmics}. Let $O$ be an optimal allocation closest to $A$, i.e., with minimal cardinality of $A^H \oplus O^H$, and let $x_O$ be the minimum value of a bundle in $O$. We decompose $A^H \oplus O^H$ into alternating paths. A good has degree zero or two in $A^H \oplus O^H$. In the latter case, one path goes through the good. In an agent $v$, we pair $A$- and $O$-edges in $A^H \oplus O^H$ as much as possible and leave the remaining $A$- or $O$-edges as starting edges of paths. Let $\hdeg_O(i)$ and $\hdeg_A(i)$ be the number of heavy goods owned by $i$ in $O$ and $A$, respectively. Let $w_i^O$ and $w_i^A$ be the value of the bundles owned by $i$ in $O$ and $A$, respectively. The \emph{utility profile} of an allocation is the multiset of the values of its bundles. 

\begin{lemma}\label{x0} $x_O \le x + \sfrac{1}{2} \le k_0s - 1$. Bundles of value more than $k_0s$ do not contain a light good in $O$.  \end{lemma}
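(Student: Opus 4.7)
The statement has three parts, of which only $x_O\le x+\half$ is substantial. For the second inequality $x+\half\le k_0s-1$, I would simply note that $k_0s>x+1$ by the definition of $k_0$, and because $s$ is a half-integer both $k_0s$ and $x$ lie on the $\half$-grid, forcing $k_0s\ge x+\sfrac{3}{2}$. The third part will follow from the first by a single light-good swap. So the real work is the first inequality; the plan is to prove it by contradiction using the two facts established in the paragraph just before the lemma, namely $\sum_{i\in N_s}w_i^O\le S$ and the fact that $x$ is the minimum weight of a bundle in $B$. The main obstacle is just putting these two observations together correctly on the half-integer grid.

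\textbf{The first inequality.} Suppose, toward a contradiction, $x_O\ge x+1$; half-integrality then forces every bundle of $O$ to have weight at least $x+1$, so $\sum_{i\in N_s}w_i^O\ge n_s(x+1)$. The observation just before the lemma gives $\sum_{i\in N_s}w_i^O\le S$. Now $S$ is invariant under the optimization of the small bundles (each heavy good contributes $s$ and each light good contributes $1$, regardless of owner), so it also equals the sum of the weights of $B$'s bundles over $N_s$. Since $x$ is the minimum weight of a bundle in $B$, at least one bundle in $N_s$ has weight exactly $x$ in $B$ while the remaining $n_s-1$ have weight at most $x+1$, giving $S\le x+(n_s-1)(x+1)=n_s(x+1)-1$. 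Chaining yields $n_s(x+1)\le n_s(x+1)-1$, a contradiction.

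\textbf{The third part.} Suppose some bundle of $O$ at agent $i$ has weight $w_i^O>k_0s$ and contains a light good $g$. Half-integrality yields $w_i^O\ge k_0s+\half\ge x+2$. By the first part, some agent $j$ satisfies $w_j^O\le x+\half$, so the value gap forces $j\ne i$. Moving $g$ from $i$ to $j$ leaves $O^H$ unchanged, so condition (1) of the previous section is preserved, and it multiplies the product of bundle values by $(w_i^O-1)(w_j^O+1)/(w_i^Ow_j^O)$. This ratio exceeds $1$ because $w_i^O-w_j^O-1\ge(x+2)-(x+\half)-1=\half>0$, so the swap strictly increases the NSW, contradicting the optimality of $O$.
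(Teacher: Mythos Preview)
Your proof is correct and follows essentially the same approach as the paper. The paper's proof is terser: for the first inequality it simply invokes the standing assumption (made in the paragraph before the lemma) that the average bundle value in $N_s$ is strictly less than $x+1$, whereas you re-derive the equivalent bound $S\le n_s(x+1)-1$ from the existence of a weight-$x$ bundle in $B$; the paper also leaves the middle inequality $x+\half\le k_0s-1$ implicit from the definition of $k_0$, and for the third part gives a one-line ``move the light good to a bundle of value $x_O$'' argument that you have spelled out with the explicit product ratio. The content is the same.
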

\begin{proof} Since $O$ cannot assign more value to the agents in $N_s$ than $A$ and since in $A$ the average value of a bundle in $N_s$ is less than $x + 1$, $x_O < x + 1$.

If a bundle of value more than $k_0s$ would contain a light good, moving the light good to a bundle of value $x_O$ would improve the $\NSW$ of $O$. \end{proof}

 We will first show that $O$ and $A$ agree on the heavy bundles. More precisely, 

\begin{lemma}\label{agree on heavy} $A^H$ and $O^H$ agree on $\cup_{\ell \ge k_0} R_\ell \cup R'_\ell$. Bundles in $N_s$ contain at most $k_0 - 1$ heavy goods in $O$. \end{lemma}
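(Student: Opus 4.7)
The plan is to combine Lemma~1 (conservation of heavy-good sets across the levels of the lexmin) with an NSW concavity/exchange argument, and then to invoke the ``closest to $A$'' hypothesis on $O$ to upgrade per-agent count-agreement into per-agent set-agreement.

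\textbf{Conservation via Lemma~1.} First, applying Lemma~1 at every threshold $k \ge k_0$ gives that the set of heavy goods received by $\cup_{\ell \ge k} R_\ell \cup R'_\ell$ is identical in $H$ and in any valid allocation of the heavy goods, and hence in both $A^H$ and $O^H$ ($A^H$ coincides with $H$ on agents outside $N_s$, and the reassignment producing $C$ stays within $\Gamma(N_s)$). Taking successive differences in $k$, for each individual level $\ell \ge k_0$ there is a fixed set $G_\ell$ of heavy goods that both $A^H$ and $O^H$ distribute among the agents of $R_\ell \cup R'_\ell$.

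\textbf{Per-agent counts via NSW exchange.} Next, for $\ell > k_0$ I would show that each agent in $R_\ell \cup R'_\ell$ receives in $O$ the number of heavy goods prescribed by $H$ ($\ell$ for $R_\ell$-agents, $\ell - 1$ for $R'_\ell$-agents). Two observations drive this: (i) by Lemma~\ref{x0}, an $O$-bundle of value exceeding $k_0 s$ carries no light good, so every $O$-bundle with at least $k_0$ heavy goods is heavy-only; (ii) if some agent $a$ at this level had fewer than $k_0$ heavy goods in $O$, the fixed total from Step~1 would force another agent $b$ at the same level to exceed its $H$-count, yielding bundle values differing by more than $s$, and a single heavy-good transfer from $b$ to $a$ along an alternating path internal to the level (existing by Lemma~1) strictly raises NSW, contradicting optimality. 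Hence the count on each agent matches $H$, the bundle is heavy-only, and the balanced split of $A^H$ is forced. A minor adaptation, balancing $R_{k_0}$ (heavy-only, value $k_0 s$) against $R'_{k_0}$ (in $N_s$, value $\le x+1$), handles the level $\ell = k_0$.

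\textbf{From counts to sets, and the second claim.} With per-agent counts now matching between $A^H$ and $O^H$ on every level $\ell \ge k_0$, any residual difference $A^H \oplus O^H$ inside $\cup_{\ell \ge k_0} R_\ell \cup R'_\ell$ decomposes into alternating cycles internal to single levels; reversing such a cycle preserves every bundle value (hence the NSW) while strictly shrinking $|A^H \oplus O^H|$, contradicting the closest-to-$A$ hypothesis. This proves the first claim, and since $A^H$'s bundles in $N_s$ have value $\le x+1 < k_0 s$, it also gives the second claim on $R'_{k_0} \subseteq N_s$. For an agent $a \in N_s \setminus R'_{k_0} \subseteq \cup_{\ell < k_0} R_\ell \cup R'_\ell$, Lemma~1 forbids $a$ from receiving any good in $\cup_{\ell \ge k_0} G_\ell$; an $O$-bundle of $\ge k_0$ heavy goods on $a$ would then draw entirely from the pool owned by $\cup_{\ell < k_0}$ in $H$ (per-agent maximum $k_0 - 1$), creating value $\ge k_0 s > x_O + \half$ on $a$ against a strictly smaller peer bundle in the same pool, and reversing the responsible alternating path strictly improves the NSW of $O$.

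\textbf{Main obstacle.} The delicate step is the NSW-exchange argument inside $\cup_{\ell < k_0}$, where bundles may contain light goods so that the gain from moving one heavy good along an alternating path depends on the light-good residues of the bundles involved. The inequality $k_0 s > x + 1 \ge x_O + \half$ supplied by Lemma~\ref{x0} keeps the margin positive, but tight subcases (for instance $k_0 s = x + \sfrac{3}{2}$ with $x_O = x + \half$) may require a two-good exchange rather than a single swap.
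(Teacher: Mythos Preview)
Your conservation step overstates Lemma~1. That lemma only says that the heavy goods which $H$ assigns to $\cup_{\ell\ge k}R_\ell\cup R'_\ell$ must stay in that set in any allocation; it does \emph{not} say that no further heavy goods can be assigned there. A good owned by a lower agent in $H$ may well be heavy for an upper agent, and $O$ is free to move it up. Hence there is no fixed set $G_\ell$ per level, and your ``successive differences'' argument collapses: at level $\ell$ the set of heavy goods in $O$ contains (after one has settled the higher levels) the $A$-goods of that level, but may be strictly larger. The paper uses exactly this one-sided containment, together with closeness, to obtain $A^H_i\subseteq O^H_i$ for every $i$ at the current level and then to locate an $i$ with $\hdeg_O(i)>\hdeg_A(i)$; it never claims the reverse containment.

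This error propagates into Step~2. Without a fixed total you cannot pigeonhole to a peer $b$ ``at the same level'' with fewer heavy goods than in $H$; in fact the deficit agent $j$ produced by an alternating path in $A^H\oplus O^H$ (which is the path you actually need, since you want to modify $O$, not $A$) can sit at any level $\le k$, in particular inside $N_s$ where it may carry light goods. The paper's proof is precisely a case analysis on this $j$: it compares $\hdeg_O(j)$ with $\hdeg_O(i)$, handles equal and smaller cases by augmenting $P$ and shifting a controlled number of light goods (using Lemma~\ref{x0} to bound how many light goods $j$ can hold), and rules out $\hdeg_O(j)\ge\hdeg_O(i)$ via the lexmin property of $A$. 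The ``main obstacle'' you flag at the end is therefore not a tail subcase but the heart of the argument, and your sketch does not supply it. Your Step~3 idea (alternating-cycle reversal to pass from equal counts to equal sets via closeness) is fine in spirit, but it only becomes relevant once the count agreement is actually established.
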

\begin{proof}
We start with the first claim. Assume otherwise and let $k \ge k_0$ be maximal such that $A^H$ and $O^H$ do not agree on the agents in $R_k \cup R'_k$. Since $A^H$ and $O^H$ agree on all agents in $\bigcup_{h > k} R_h \cup R'_h$ and since the heavy goods assigned by $A^H$ to the agents in $\bigcup_{h \ge k} R_h \cup R'_h$ must be assigned to these agents in any allocation, the heavy goods assigned to the agents in $R_k \cup R'_k$ by $A$ must also be assigned to them by $O$. Since $\OH$ is closest to $\AH$, we even have $\AH_i \subseteq \OH_i$ for all $i \in R_k \cup R'_k$.  So there must be an $i \in R_k \cup R'_k$ which owns more heavy goods in $O$ than in $A$, i.e., $\hdeg_O(i) > \hdeg_A(i) \ge k-1$ and hence the heavy value of $i$ in $O$ is at least $k_0s$. Thus $O_i$ contains no light good as otherwise the value of $O_i$ would be larger than $k_0s$, a contradiction to Lemma~\ref{x0}. Let $P$ be a maximal alternating path with endpoint $i$ and let $j$ be its other endpoint. Then $\hdeg_O(j) < \hdeg_A(j)$. Let $\ell$ be the number of light goods owned by $j$. 

  If $\hdeg_O(j) = \hdeg_O(i) - 1$, augmenting the path to $O$ and also interchanging the light goods owned by $i$ and $j$ does not change the utility profile and hence the $\NSW$ of $O$ and moves $O$ closer to $A$, a contradiction.

  If $\hdeg_O(j) \le \hdeg_O(i) - 2$ and $w_j^O < w_i^O$, we augment the path and  move $\min(\ell,\floor{s})$ light goods from $j$ to $i$. This moves $O$ closer to $A$ and does not decrease the $\NSW$ of $O$ (it improves it if $w_j^O \le w_i^O - 1$). If $w_j^O \ge w_i^O$, $j$ owns at least $2s$ light goods.  Since $w^O_i \ge k_0s$ and bundles of value larger than $k_0s$ contain no light good, we have $w_j^O = w_i^O = k_0s$ and $k_0s \le x_O + 1$. Thus $k_0s = x_O + 1$. We move a light good from $j$ to a bundle of value $x_O$, augment the path and move $\floor{s}$ light goods from $j$ to $i$. This improves the $\NSW$ of $O$, a contradiction. 

If $\hdeg_O(j) \ge \hdeg_O(i)$, then $\hdeg_A(j) > \hdeg_O(j) \ge \hdeg_O(i) > \hdeg_A(i) \ge k - 1$. Thus
$\hdeg_A(j) \ge \hdeg_A(i) + 1$ and there is an alternating path starting with an $A$-edge from $j$ to $i$. Thus $A$ is not lexmin, a contradiction.
This completes the proof of the first claim.

For the second claim, assume that there is an agent $i \in N_s$ owning $k_0$ or more heavy goods in $O$. Since $i$ owns at most $k_0 - 1$ heavy goods in $A$, we have $\hdeg_O(i) \ge k_0 > \hdeg_A(i)$. As above, we conclude that $O_i$ contains no light good and consider an alternating path $P$ with endpoint $i$. Let $j$ be the other endpoint. Then $\hdeg_O(j) < \hdeg_A(i)$.

If $\hdeg_O(j) < hdeg_O(i)$, we argue as above. If $\hdeg_O(j) \ge \hdeg_O(i) \ge k_0$, then $\hdeg_A(j) > \hdeg_O(j) \ge \hdeg_O(i) > \hdeg_A(i)$, $P$ is an alternating path starting with an $A$-edge from $j$ to $i$ and $A$ is not lexmin.
This completes the proof of the second claim.
\end{proof}

\begin{lemma} $A$ and $O$ agree on the bundles $R_{k_0} \cup \bigcup_{h > k_0} R_h \cup R'_h$, i.e., on the agents owning $k_0$ or more heavy goods in $A$. \end{lemma}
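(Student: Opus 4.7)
The plan is to observe that this statement follows almost immediately by combining Lemma~\ref{agree on heavy} with Lemma~\ref{x0}. The set $T := R_{k_0} \cup \bigcup_{h > k_0} R_h \cup R'_h$ is contained in $\bigcup_{\ell \ge k_0} R_\ell \cup R'_\ell$, so Lemma~\ref{agree on heavy} already tells us $A^H_i = O^H_i$ for every $i \in T$. Thus it remains only to show that neither $A$ nor $O$ assigns any light good to an agent in $T$; once that is established, the bundles $A_i$ and $O_i$ coincide as sets of goods.

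For $A$ the no-light-goods property is built into the construction. Each agent in $T$ owns at least $k_0$ heavy goods in $A$, giving a bundle value of at least $k_0 s$. By the choice of $k_0$ we have $k_0 s > x+1$, so such bundles are large bundles, lie outside $N_s$, and were not touched by the greedy light-goods phase nor by the subsequent optimization of $C$.

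For $O$ I would argue by contradiction. Suppose some $i \in T$ has a light good in $O_i$. Since $A^H_i = O^H_i$ and $\hdeg_A(i) \ge k_0$, the heavy value of $O_i$ is at least $k_0 s$, so adding even one light good pushes $v_i(O_i) \ge k_0 s + 1 > k_0 s$. This contradicts the second sentence of Lemma~\ref{x0}, which forbids light goods in $O$-bundles of value exceeding $k_0 s$. Hence $O$ assigns no light goods to agents in $T$, so $A_i = O_i$ for every $i \in T$.

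I do not anticipate any real obstacle: the heavy-side work is done by the preceding two lemmas, and the light-side argument is a one-line appeal to Lemma~\ref{x0}. The only minor point to verify is the boundary case $R_{k_0}$, where the heavy-only value is exactly $k_0 s$ rather than strictly larger; there the pre-light value is not yet in the forbidden regime, but \emph{adding} any light good strictly pushes the value above $k_0 s$, which is enough for Lemma~\ref{x0} to bite.
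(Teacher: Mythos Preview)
Your proof is correct and follows the same approach as the paper's own proof, which is just a two-line appeal to Lemma~\ref{agree on heavy} for the heavy goods together with the observation that neither $A$ nor $O$ places light goods on these agents. Your write-up simply spells out the latter observation in more detail (including the boundary case $R_{k_0}$), whereas the paper leaves it implicit.
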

\begin{proof} By the preceeding Lemma, they agree on the heavy goods. Also none of the bundles contains a light good in either $A$ or $O$.  \end{proof}

We still need to show that $A$ is the optimal allocation for the agents in $N_s$. \emph{Since we know at this point that $A$ and $O$ agree on the large bundles, we use $A$ and $O$ in the sequel for the bundles of $A$ and $O$ allocated to the agents in $N_s$}.

\begin{lemma}\label{less-more} If $A$ is suboptimal, there are more bundles of value $x + \sfrac{1}{2}$ in $O$ than in $A$ and less bundles of value $x + 1$.  \end{lemma}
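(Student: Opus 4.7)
The plan is first to establish that every bundle of $O$ restricted to $N_s$ has value in $\{x, x + \half, x + 1\}$, and then to read off the two inequalities directly from equation~\eqref{large mhalf is good}.

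Setup: since $A$ and $O$ agree on the large bundles (Lemma~\ref{agree on heavy}, together with the observation that these bundles contain no light good in either allocation), they assign exactly the same multiset of goods to the agents in $N_s$. Because we restrict to allocations where every heavy good goes to an agent considering it heavy, each heavy good contributes $s$ and each light good contributes $1$; hence the number of bundles in $N_s$ is $n_s$ and their total value is the same $S$ in both $A$ and $O$.

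For the range restriction on $O$, I use a light-good swap argument. By Lemma~\ref{agree on heavy}, each bundle in $O \cap N_s$ owns at most $k_0 - 1$ heavy goods and thus has heavy contribution at most $(k_0 - 1)s \le x + 1$; any bundle of value at least $x + \sfrac{3}{2}$ therefore contains a light good. Lemma~\ref{x0} gives $x_O \le x + \half$. If $O$ contained a bundle of value $\ge x + \sfrac{3}{2}$ and another of value $\le x - \half$, moving a light good from the former to the latter would strictly increase $\NSW(O)$ because the value gap exceeds one, contradicting optimality. Combined with the average constraint $S/n_s \in (x, x+1)$ inherited from $A$, this forces every bundle of $O \cap N_s$ to take a value in $\{x, x + \half, x + 1\}$, ruling out the boundary windows at either end.

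Once $O \cap N_s$ uses only the three values $x$, $x + \half$, $x + 1$, introduce the counts $n_0^O, \nhalf^O, n_1^O$. The identities $n_0^O + \nhalf^O + n_1^O = n_s$ and $n_0^O x + \nhalf^O(x + \half) + n_1^O(x+1) = S$ yield $n_1^O = (S - n_s x) - \nhalf^O / 2$, so the triple is determined by $\nhalf^O$ alone. Applying equation~\eqref{large mhalf is good} to $O$, $\NSW(O)$ is a strictly increasing function of $\nhalf^O$ because $(x + \half)^2 > x(x + 1)$. The hypothesis $\NSW(O) > \NSW(A)$ therefore forces $\nhalf^O > \nhalf^A$, and hence $n_1^O < n_1^A$, which is exactly the claim.

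The main obstacle is the range-restriction step: the light-good swap yields a direct contradiction only when the value gap strictly exceeds one, so the boundary cases where $O$'s bundle values fit in a window of width exactly one (for instance $\{x - \half, x, x + \half\}$ or $\{x + \half, x + 1, x + \sfrac{3}{2}\}$) must be excluded using the exact average $S/n_s$; in each such window the extremal bundle contains a light good whose removal forces either a violation of the average or a locally $\NSW$-improving swap.
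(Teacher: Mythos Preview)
Your approach has a genuine gap in the range-restriction step, and the hand-waving in your final paragraph does not close it.

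First, the light-good swap you invoke only \emph{strictly} improves the product when the value gap exceeds~$1$; with a gap of exactly~$1$ (e.g.\ a bundle of value $x+\sfrac{3}{2}$ and one of value $x+\sfrac{1}{2}$) the swap is $\NSW$-neutral, so it does not contradict optimality of $O$. Second, your claimed conclusion that all $O$-bundles in $N_s$ lie in $\{x,\,x+\half,\,x+1\}$ is stronger than anything the paper ever proves: the later Lemma~\ref{xo} only pins the values to $\{x-\half,\,x,\,x+\half,\,x+1\}$, explicitly allowing $x-\half$. Your sketch gives no mechanism to exclude a bundle of value $x-\half$ (such a bundle need not contain a light good, so the ``extremal bundle contains a light good'' remark fails on this side). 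Third, there is a circularity risk: the paper's own proof that $O$ has no bundle of value $\ge x+\sfrac{3}{2}$ (second half of Lemma~\ref{xo}) \emph{uses} Lemma~\ref{less-more}; you cannot borrow that conclusion here.

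The paper avoids all of this by not restricting $O$'s profile in advance. It assumes $\nhalf \le \mhalf$ and then massages $O$'s \emph{utility profile} (abandoning feasibility, keeping only the bundle count and total value) via local moves that each strictly increase the product while preserving $\nhalf \le \mhalf$, until the profile is supported on $\{x,\,x+\half,\,x+1\}$; at that point equation~\eqref{large mhalf is good} yields $\NSW(O)\le\NSW(A)$, contradicting suboptimality. The point is that this massaging argument happily handles arbitrary values $x+d$ with $d<0$ or $d>1$, which is exactly what your approach cannot do without the missing range restriction.
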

\begin{proof} Let $m_d$ and $n_d$ be the number of bundles of value $x + d$ in $A$ and $O$, respectively. 
Only $m_0$, $\mhalf$, and $m_1$ are non-zero. There might be non-zero $n_k$ for $k < 0$ and $k > 1$. For the sake of a contradiction, assume $\nhalf \le \mhalf$. We massage $O$ by moving portions of $\half$ around; we maintain the number of bundles and their total value, but do not insist that $O$ stays a feasible allocation. Each move will strictly increase the $\NSW$, and we will end up with $A$. Thus $O = A$.

We observe first that the sums $\sum_{d \ge \sfrac{1}{2}} n_d$ and $\sum_{d \le \sfrac{1}{2}} n_d$ must both be non-zero, as otherwise the average weight of a bundle in $O$ would be at most $x$ or at least $x + 1$.  

If $n_d = 0$ for $d < 0$ and $d > 1$, $\nhalf \le \mhalf$ implies $\NSW(O) \le \NSW(A)$ by equation (\ref{large mhalf is good}). If $n_d > 0$ for some $d < 0$ or $d > 1$, one of the following rules improves the $\NSW$ of $O$ and maintains $\nhalf \le \mhalf$.

If $e + 1 \le d$ and $n_e > 0$ and $n_d > 0$, decrease $n_e$ and $n_d$ by one each and increase $n_{e + \sfrac{1}{2}}$ and $n_{d + \sfrac{1}{2}}$ by one each except if this would increase $\nhalf$ above $\mhalf$. If this rule is not applicable, we either have $n_d = 0$ for $d < 0$ and $d > 1$ and are done, or $\nhalf = 0 = \mhalf$ and $n_d = 0$ for $d \le -\sfrac{1}{2}$ or $d \ge \sfrac{3}{2}$ and either $n_{-\sfrac{1}{2}}$ or $n_{\sfrac{3}{2}}$ is non-zero (but not both). 
Since the total value of the bundles in $A$ and $O$ must be same, we have $-n_{-\sfrac{1}{2}}/2 + 0n_0 + n_1 + 3n_{\sfrac{3}{2}}/2 = 0m_0 + m_1$. So $n_{-\sfrac{1}{2}}$ and $n_{\sfrac{3}{2}}$ are even; note that only one can be non-zero.

If $n_{-\sfrac{1}{2}}$ is positive, we decrease it by two, decrease $n_1$ by one (it must be positive), and increase $n_0$ by three. This leaves the number of bundles and their total value unchanged. It increases $\NSW$ as
    \[ \frac{x^3}{(x - \half)^2 (x + 1)} = \frac{x^3}{(x^2 - x + \sfrac{1}{4})(x + 1)} = \frac{x^3}{x^3 - \sfrac{3x}{4} + \sfrac{1}{4}} > 1\]
    for $x \ge \sfrac{1}{2}$. 

    If $n_{\sfrac{3}{2}}$ is positive, we decrease it by two, decrease $n_0$ by one (it must be positive), and increase $n_1$ by three. This leaves the number of bundles and their total value unchanged. It increases $\NSW$ as
    \[ \frac{(x + 1)^3}{(x + \sfrac{3}{2})^2 x} = \frac{x^3 + 3x^2 + 3x + 1}{x^3 + 3x^2 + \sfrac{9}{4}x} > 1.\]

 A similar argument shows that there are less bundles of value $x + 1$ in $O$ than in $A$. Assume, for the sake of a contradiction, $n_1 \ge m_1$. If $n_e > 0$ and $n_d > 0$ for some $e < 1$ and $d > 1$, we decrease both counts by one and increase $n_{e + \sfrac{1}{2}}$ and $n_{d + \sfrac{1}{2}}$ by one each. So, we may assume that either $n_e = 0$ for all $e < 1$ or $n_d = 0$ for all $d > 1$. The former is impossible, as then the average weight of a bundle would be larger than $x + 1$. So $\sum_{e < 1} n_e > 0$. If $n_e > 0$ for some $e < 0$, we decrease $n_e$ and $n_{\sfrac{1}{2}}$ by one (the latter is positive by part one) and increase $n_{e + \sfrac{1}{2}}$ and $n_0$ by one. In this way, all bundles have values in $\sset{x, x + \sfrac{1}{2}, x + 1}$. Since the number of bundles and their total value is fixed, having more bundles of value $x + \sfrac{1}{2}$ implies having less bundles of value $x$ and $x + 1$. 
  \end{proof}

\newcommand{\rhalf}{r_{\sfrac{1}{2}}}
\begin{lemma}[Lemma 26 from~\cite{NSW-twovalues-halfinteger}]\label{max number of heavy} Let $r_d$ be the maximum number of heavy goods that a bundle of value $x + d$ may contain. Then $r_d = \max\set{r \in \NN_0}{x + d -sr \in \NN_0}$ and
  \begin{itemize}
  \item $r_0 = \rhalf - 1$ and $r_1 = \rhalf + 1$ if $x + 1$ is an integer multiple of $s$.
  \item $r_1 = r_0 = \rhalf + 1$ if $x + 1$ is not an integer multiple of $s$ and $x + 1 > (\rhalf + 1) s$.
  \item $r_1 = r_0 = \rhalf - 1$ if $x + 1$ is not an integer multiple of $s$ and $x + 1 < (\rhalf + 1) s$.
  \end{itemize}
    \end{lemma}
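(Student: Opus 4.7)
The plan is to reduce everything to elementary arithmetic by extracting the single parameter $\ell := x + \half - s\rhalf \in \NN_0$, which is a non-negative integer by the very definition of $\rhalf$. The first equality $r_d = \max\set{r \in \NN_0}{x + d - sr \in \NN_0}$ is immediate: a bundle of value $x + d$ with $r$ heavy goods must have $x + d - sr$ light goods, and the light-good count must be a non-negative integer.

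Writing $s = k_s + \half$ with $k_s = \lfloor s \rfloor \ge 1$, the key observation is a parity argument. Since $sr$ has fractional part $0$ for $r$ even and $\half$ for $r$ odd, the set of $r$'s for which $x + d - sr$ is an integer forms an arithmetic progression in steps of $2$. The values $x$ and $x + 1$ share a fractional part while $x + \half$ has the opposite one, so the admissible parities for $r_0$ and $r_1$ coincide and are opposite to the one for $\rhalf$. Hence the candidates for $r_0$ and $r_1$ all lie in $\{\rhalf - 1, \rhalf + 1, \rhalf + 3, \ldots\}$ (restricted to $\NN_0$).

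Next, the maximality of $\rhalf$ forces $\ell \le 2k_s$, since otherwise $r = \rhalf + 2$ would also satisfy $sr \le x + \half$ with non-negative integer remainder $\ell - 2k_s - 1$, contradicting the maximality. This bound rules out $r \ge \rhalf + 3$ as a candidate for $r_0$ or $r_1$: $s(\rhalf + 3) \le x + 1$ would force $\ell \ge 3k_s + 1 > 2k_s$. So $r_0, r_1 \in \{\rhalf - 1, \rhalf + 1\}$, and the direct computations $x + 1 - s(\rhalf + 1) = \ell - k_s$ and $x - s(\rhalf + 1) = \ell - k_s - 1$ give $r_1 = \rhalf + 1$ iff $\ell \ge k_s$, and $r_0 = \rhalf + 1$ iff $\ell \ge k_s + 1$; in the complementary subcases the respective value drops to $\rhalf - 1$.

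Finally, the three cases of the lemma split exactly by the trichotomy $\ell = k_s$, $\ell \ge k_s + 1$, and $\ell \le k_s - 1$. The identity $x + 1 = s(\rhalf + 1) + (\ell - k_s)$ combined with $\ell \le 2k_s$ identifies $\ell = k_s$ with $x + 1 = (\rhalf + 1)s$ being an integer multiple of $s$, and the other two cases with $x + 1 > (\rhalf + 1)s$ and $x + 1 < (\rhalf + 1)s$ respectively; neither of these can be an integer multiple of $s$, since by the same analysis applied to such a multiple one would recover $\ell = k_s$. Substituting into the formulas above recovers the stated values of $r_0$ and $r_1$. The one subtlety I anticipate is the boundary case $\rhalf = 0$ with $\ell < k_s$, where the third case formally demands $\rhalf - 1 = -1$; this is settled by noting that $\rhalf = 0$ and $\ell < k_s$ force $x + 1 < s$, so no bundle of value $x$ or $x + 1$ can contain any heavy good and the formula should be read with the convention $r_0 = r_1 = 0$.
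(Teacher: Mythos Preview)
Your proof is correct and follows essentially the same route as the paper: introduce the non-negative integer remainder $\ell$ (the paper calls it $y$) via $x+\half = \rhalf s + \ell$, use a parity argument to see that $r_0,r_1$ differ from $\rhalf$ by an odd amount, bound the candidates to $\{\rhalf-1,\rhalf+1\}$, and then split into the three cases according to whether $\ell$ is equal to, larger than, or smaller than $\lfloor s\rfloor$ (equivalently, whether $y+\half$ equals, exceeds, or falls short of $s$). The only stylistic difference is that the paper first derives the constraints $r_0\le r_1\le r_0+2$ and $|\rhalf-r_0|=1$ via good-swapping arguments (add a light good, trade two heavies for $2s-1$ lights, trade one heavy for $\lfloor s\rfloor$ or $\lceil s\rceil$ lights), whereas you obtain the same constraints purely from the arithmetic of fractional parts and the maximality of $\rhalf$; both lead to the identical case analysis. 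Your explicit treatment of the boundary case $\rhalf=0$ with $\ell<k_s$ is a nice addition that the paper's proof does not address.
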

    
    \begin{proof} Observe first that $r_0 \le r_1 \le r_0 + 2$.  The first inequality is obvious (add a light good to the lighter bundle) and the second inequality holds because we may remove two heavy goods from the heavier bundle and add $2s - 1$ light goods. Also $r_0$ and $r_1$ have the same parity. Finally $r_{\sfrac{1}{2}} - r_0 = \pm 1$ since the two numbers have different parity and we can switch between the two values by exchanging a heavy good by either $\floor{s}$ or $\ceil{s}$ light goods.

 Let $x + \sfrac{1}{2} = \rhalf s + y$ with $y \in \NN_0$. Then $x + 1 = \rhalf s + y + \sfrac{1}{2}$. If $y + \sfrac{1}{2} = s$, $r_1 = \rhalf +1$. Also $r_0 < r_1$ and hence $r_0 = \rhalf-1$. If $y + \sfrac{1}{2} > s$ then $y - \sfrac{1}{2} \ge s$ and therefore $r_1 = r_0 = \rhalf + 1$. If $y + \sfrac{1}{2} < s$, then $x + 1 = (\rhalf - 1)s + (s + y + \sfrac{1}{2})$ and $x - 1 = (\rhalf - 1)s + (s + y - \sfrac{1}{2})$ and therefore $r_1 = r_0 = \rhalf - 1$. 
\end{proof}

The following Lemma is useful for argueing that a change of allocation is feasible. 

\begin{lemma}[Global Accounting of Light Goods]\label{global accounting} Let $S$ be the total value of all goods, for each agent $i$ let $t_i$ be a target value for the agent such that $S = \sum_i t_i$, and let $D^H$ be an allocation of the heavy goods such that $t_i - \abs{D^H_i} s$ is a non-negative integer. Then the light goods can be added to the bundles so as to give each bundle its target value and all light goods are allocated. \end{lemma}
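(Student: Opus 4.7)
The plan is to reduce the claim to a simple counting identity. For each agent $i$, set $\ell_i = t_i - |D^H_i|\cdot s$; by hypothesis this is a non-negative integer, so it makes sense to speak of giving agent $i$ exactly $\ell_i$ light goods. Since light goods are valued $1$ by every agent, if we can hand out $\ell_i$ (arbitrary) light goods to each $i$, then agent $i$'s bundle will have value $|D^H_i|\cdot s + \ell_i = t_i$, as required.

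The only thing to verify is that the total demand $\sum_i \ell_i$ matches the supply of light goods, so that the distribution actually uses up every light good. Writing $L$ for the number of light goods and using the decomposition of the total value $S$ into its heavy and light contributions, we have $S = s\sum_i |D^H_i| + L$. On the other hand, summing the definition of $\ell_i$ gives
\[
\sum_i \ell_i \;=\; \sum_i t_i - s\sum_i |D^H_i| \;=\; S - s\sum_i |D^H_i| \;=\; L,
\]
using the assumption $\sum_i t_i = S$. Hence the demands sum to exactly $L$ and we may assign $\ell_i$ light goods to each agent $i$ by any greedy procedure (for instance, order the agents and hand out light goods in blocks).

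There is no real obstacle here; the hypothesis on $t_i - |D^H_i|s$ being a non-negative integer ensures feasibility agent by agent, and the hypothesis $S = \sum_i t_i$ is precisely what makes the total supply match total demand. The lemma thus amounts to the bookkeeping identity above, which is why it can be used as a black box elsewhere in the paper to justify that redistribution of heavy goods can always be completed by an appropriate reshuffling of the light goods.
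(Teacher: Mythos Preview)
Your proof is correct and follows essentially the same approach as the paper: both arguments compute the number of light goods as $S - s\sum_i |D^H_i|$ and observe that this equals $\sum_i (t_i - |D^H_i|s)$, so the demands match the supply exactly. The paper's version is a one-line computation; yours spells out the same identity with a bit more commentary.
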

\begin{proof} Let $h_i = \abs{D^H_i}$. Then the number of available light goods is $S - \sum_i h_i s$. We need $t_i - h_is$ light goods for agent $i$ and hence $\sum_i (t_i - h_i s) = S - \sum_i h_i s$ altogether. \end{proof}

\begin{lemma}\label{simple observation} Assume $\hdeg(i) < \hdeg(j)$ and $w_i \not= w_j$. Let $\ell$ be the total number of light goods in the bundles of $i$ and $j$. Moving a heavy good from $j$ to $i$ and then adding the light goods greedily does not decrease the $\NSW$. \end{lemma}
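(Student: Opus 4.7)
The plan is to reduce the claim to a local comparison between the bundles of $i$ and $j$. Only these two bundles change, so the desired $\NSW$ inequality is equivalent to $w_i' w_j' \ge w_i w_j$, where primes denote post-operation values. The sum $W := w_i + w_j = w_i' + w_j'$ is preserved (no good enters or leaves the pair), so the identity $w_i' w_j' = \tfrac{1}{4}\bigl(W^2 - (w_i' - w_j')^2\bigr)$ turns the goal into the gap inequality $|w_i' - w_j'| \le |w_i - w_j|$.

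Next, I would use the fact that greedy redistribution of the $\ell$ lights---starting from $(h_i' s, h_j' s)$ with $h_i' = \hdeg(i) + 1$, $h_j' = \hdeg(j) - 1$, and placing each light in the currently smaller bundle---produces the minimum possible $|w_i' - w_j'|$ over all integer splits $l_1 + l_2 = \ell$ with $l_1, l_2 \ge 0$. So it suffices to \emph{exhibit} some split whose resulting values both lie in $[\min(w_i, w_j), \max(w_i, w_j)]$: any two numbers in this interval summing to $W$ have gap at most $|w_i - w_j|$, and greedy can only do better.

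The concrete task is to find an integer $l_1 \in [0, \ell]$ lying in $I := [\min(w_i, w_j) - h_i' s,\; \max(w_i, w_j) - h_i' s]$, which has length $|w_i - w_j|$. Since $s$ is half-integer, every bundle value is a multiple of $\sfrac{1}{2}$, so $|w_i - w_j| \ge \sfrac{1}{2}$ by hypothesis. The hypothesis $\hdeg(j) \ge \hdeg(i) + 1$ gives $h_i' s \le h_j s \le w_j$, so the right endpoint of $I$ is non-negative; and $w_i - h_i' s = \ell_i - s \le \ell$ puts the left endpoint below $\ell$, where $\ell_i$ denotes the number of light goods originally at $i$. A short case distinction on the sign of $w_j - w_i$ then produces the desired integer: if $w_j > w_i$, take $l_1 = \ell_i - \lfloor s \rfloor$ when $\ell_i \ge \lfloor s \rfloor$ and $l_1 = 0$ otherwise; if $w_i > w_j$, note that $w_i > (\hdeg(i)+1)s$ forces $\ell_i \ge \lceil s \rceil$, and take $l_1 = \ell_i - \lceil s \rceil$.

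The main obstacle is this last verification. In each sub-case one must check that $l_1 \in I \cap [0, \ell]$, which, using the identities $(\hdeg(i) + 1)s - \lfloor s \rfloor = w_i - \ell_i + \sfrac{1}{2}$ and $(\hdeg(i) + 1)s - \lceil s \rceil = w_i - \ell_i - \sfrac{1}{2}$, reduces in each case to $|w_i - w_j| \ge \sfrac{1}{2}$, which is the hypothesis. Once existence of $l_1$ is established, greedy matches or beats the corresponding gap, so $|w_i' - w_j'| \le |w_i - w_j|$; by the opening reduction, $w_i' w_j' \ge w_i w_j$, and the $\NSW$ does not decrease.
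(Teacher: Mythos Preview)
Your proof is correct. Both you and the paper reduce to the gap inequality $|w_i'-w_j'|\le|w_i-w_j|$ via the identity $w_i'w_j'=\tfrac14\bigl(W^2-(w_i'-w_j')^2\bigr)$, but you diverge in how you establish the gap inequality. The paper computes the greedy outcome directly: with $d=|h_i'-h_j'|\,s$ the post-move heavy gap, it observes that $d$ does not exceed the pre-move heavy gap, and then splits into the three cases $d>\ell$, $d\le\ell$ with $d$ half-integral, and $d\le\ell$ with $d$ integral; in the last case it uses that the pre-move heavy gap is then also integral, so $|w_i-w_j|\ge 1$. Your route instead invokes that greedy minimizes the gap over all integer splits and then exhibits a witness split landing both values in $[\min(w_i,w_j),\max(w_i,w_j)]$, reducing each sub-case to $|w_i-w_j|\ge\sfrac12$. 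Your argument is a bit longer and leans on the (standard, but unproved here) optimality of greedy; the paper's is shorter and self-contained but needs the small parity observation in the third case. One simplification available to you: since the endpoints of your interval $I$ lie in $\tfrac12\mathbb{Z}$, have right endpoint $\ge 0$, left endpoint $\le\ell$, and length $\ge\sfrac12$, the integer $\max(0,\lceil\min(w_i,w_j)-h_i's\rceil)$ already lies in $I\cap[0,\ell]$, so the explicit case split on $\mathrm{sgn}(w_j-w_i)$ and on $\ell_i$ versus $\lfloor s\rfloor$ can be avoided.
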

\begin{proof} The absolute difference in heavy weight decreases except if $\hdeg(i) + 1 = \hdeg(j)$ when it is unchanged. Let $d$ be the difference in heavy weight after the reallocation of the heavy good. If $d > \ell$, the greedy allocation reduces the weight difference to $d - \ell$ and the difference before the reallocation was at least as large. If $d \le \ell$ and $d$ is half-integral, the greedy allocation reduces the weight difference to $\sfrac{1}{2}$ and the weight difference was at least as large before. If $d$ is integral, the greedy allocation reduces the weight difference to either zero or one. The weight difference before was at least one since $w_i \not= w_j$ and the difference in heavy weight was integral before the reallocation. \end{proof}

\newcommand{\tw}{\mathit{tw}}

We use $A_d$ and $O_d$ for the set of agents owning bundles of value $x + d$ in $A$ and $O$, respectively. 

  \begin{lemma}[Lemmas 14 and 15 in~\cite{NSW-twovalues-halfinteger}]\label{xo}
    $x - \sfrac{1}{2} \le x_O \le x + \sfrac{1}{2}$. The bundles in $O$ have values in $\sset{x - \sfrac{1}{2}, x, x + \sfrac{1}{2}, x + 1}$. 
\end{lemma}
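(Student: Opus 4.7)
The upper bound $x_O \le x + \sfrac{1}{2}$ is already supplied by Lemma~\ref{x0}, so I only need to rule out bundles of value $\le x - 1$ and bundles of value $\ge x + \sfrac{3}{2}$ in $O$. The uniform strategy is: pick a bundle whose value deviates in the forbidden direction, pair it with an opposite-extreme partner, and execute a single-good transfer (or a heavy-good swap along an alternating path) between the two. Each such move will either strictly increase $\NSW$, or preserve $\NSW$ while strictly decreasing $|A^H \oplus O^H|$, both contradicting the choice of $O$. The key structural input is Lemma~\ref{agree on heavy}: every $j \in N_s$ satisfies $\hdeg_O(j) \le k_0 - 1$, so its heavy mass is at most $(k_0-1)s \le x + 1$.

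For the lower bound, assume $w_i^O \le x - 1$ for some $i$. Because the average bundle value in $N_s$ exceeds $x$, some $w_j^O \ge x + \sfrac{1}{2}$. If $O_j$ contains a light good, moving it to $O_i$ changes the product $w_i^O w_j^O$ by $w_j^O - w_i^O - 1 \ge \sfrac{1}{2}$, a strict improvement. If $O_j$ is heavy-only, then $w_j^O = \hdeg_O(j) s$, and $\hdeg_O(j) s = w_j^O > w_i^O \ge \hdeg_O(i) s$ forces $\hdeg_O(i) < \hdeg_O(j)$. Lemma~\ref{simple observation} now transfers a heavy good from $j$ to $i$ along an alternating path and greedily redistributes the light goods in $O_i \cup O_j$ without decreasing $\NSW$; since the initial weight gap is at least $\sfrac{3}{2}$, the greedy step strictly shrinks it (and hence strictly improves $\NSW$) in every subcase except a narrow parity case, and in that parity case the move lands at a tied allocation that is strictly closer to $A$, again contradicting the choice of $O$.

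For the upper bound on bundle values, assume $w_j^O = v \ge x + \sfrac{3}{2}$. Since $\hdeg_O(j)s \le (k_0-1)s \le x + 1 < v$, the bundle $O_j$ contains a light good; moving it to a minimum bundle $O_i$ changes the product by $v - x_O - 1 \ge 0$, strict unless $v = x_O + 1$. Together with $v \ge x + \sfrac{3}{2}$ and $x_O \le x + \sfrac{1}{2}$, the non-strict case pins $x_O = x + \sfrac{1}{2}$ and $v = x + \sfrac{3}{2}$; iterating the same argument over all bundles shows that in this corner every bundle value lies in $\sset{x + \sfrac{1}{2}, x + 1, x + \sfrac{3}{2}}$. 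Pick such a pair $(O_i, O_j)$ with values $x + \sfrac{1}{2}$ and $x + \sfrac{3}{2}$. Their total is $2(x+1)$ and $(x+1)^2 > (x + \sfrac{1}{2})(x + \sfrac{3}{2})$, so replacing them by two bundles of value $x + 1$ strictly improves $\NSW$. The main obstacle is proving that this last rearrangement is realizable as an allocation satisfying the agent-considers-heavy restriction; I handle it by combining the parity constraint of Lemma~\ref{max number of heavy} (which makes $\hdeg_O(i) + \hdeg_O(j)$ splittable into two admissible heavy counts for value $x + 1$) with Lemma~\ref{global accounting} for the light goods, and by using a short alternating path in the bipartite heavy-edge graph whenever the natural reassignment of a heavy good falls on an agent that considers it light. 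This yields the final contradiction and completes the lemma.
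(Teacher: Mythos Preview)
Your plan has a real feasibility gap that appears in both halves of the argument: you repeatedly assume you can reroute heavy goods between two chosen agents, but nothing guarantees this respects the constraint that heavy goods go only to agents who value them as heavy.

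For the lower bound, after picking $i$ with $w_i^O \le x-1$ and a heavy-only $j$ with $w_j^O \ge x+\sfrac{1}{2}$, you write that ``Lemma~\ref{simple observation} now transfers a heavy good from $j$ to $i$ along an alternating path''. Lemma~\ref{simple observation} says nothing about feasibility or alternating paths; and the mere inequality $\hdeg_O(i) < \hdeg_O(j)$ does not produce an $O$/$\bar O$ alternating path from $j$ to $i$ --- the two agents may lie in different components of the heavy-edge graph. Your fallback (``in the parity case the move lands \ldots strictly closer to $A$'') is also unjustified: a move that is not along an $A^H \oplus O^H$ path has no reason to decrease $|A^H \oplus O^H|$. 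The paper avoids this by never picking $i$ and $j$ independently: it first argues that bundles above $x$ are heavy-only, shows the top value must be $(k_0-1)s$, and then chooses $i$ with $\hdeg_O(i) > \hdeg_A(i)$ so that an $A^H\oplus O^H$ alternating path to some $j$ is \emph{guaranteed}; augmenting that path is automatically feasible and automatically reduces $|A^H\oplus O^H|$.

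For the upper bound, once you are in the corner $\{x+\sfrac{1}{2},\,x+1,\,x+\sfrac{3}{2}\}$ you want to turn one $(x+\sfrac{1}{2})$-bundle and one $(x+\sfrac{3}{2})$-bundle into two $(x+1)$-bundles. Numerically $\hdeg_O(i)+\hdeg_O(j)$ is even and can be split into two counts of the right parity, but you still have to realise that split with the \emph{actual} heavy goods in $O_i\cup O_j$: if a good you need to hand to $j$ is light for $j$, your fix ``use a short alternating path'' will drag in third agents whose bundles then change, and you give no argument that this can be done without hurting $\NSW$ elsewhere. The paper takes an entirely different route here: it splits on whether $A$ is optimal (then $O^H=A^H$ forces $x_O=x$) or not (then Lemma~\ref{less-more} supplies an agent $i\in (O_{\sfrac{1}{2}}\cup O_{\sfrac{3}{2}})\cap(A_0\cup A_1)$ with $\hdeg_O(i)\ne\hdeg_A(i)$, so again an $A^H\oplus O^H$ alternating path exists), and carries out explicit light-good bookkeeping along that path. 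The comparison with $A$ is doing essential work that your purely $O$-internal rearrangement cannot replace.
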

\begin{proof} The average value of a bundle in $O$ is less than $x + 1$. Therefore $x_O \le x + \sfrac{1}{2}$. 
Assume $x_O \le x - 1$.  In $O$, all light goods are contained in bundles of value at most $x_O + 1$, and hence bundles of value larger than $x$ are heavy-only. Any bundle contains at most $k_0 - 1$ heavy goods (Lemma~\ref{agree on heavy}) and hence has heavy value at most $(k_0 - 1)s$.
Let $y = (k_0 - 1)s$. If $y \le x$, the average value of a bundle in $O$ is strictly less than $x$ (there is a bundle of value $x_O$ and all bundles have value at most $x$), but the average value of a bundle in $A$ is at least $x$, a contradiction. So $y \in \{x + \sfrac{1}{2}, x + 1\}$. In $O$, bundles of value $y$ are heavy-only. Let $U$ be the set of owners of the bundles of value $y$ in $O$. If their bundles in $A$ have value $y$ or more, the average value of a bundle in $A$ is larger than the average value in $O$, a contradiction. Note that bundles in $N \setminus U$ have value at least $x$ in $A$, have value at most $x$ in $O$, and there is a bundle of value $x_O$ in $O$.

  So there is an agent $i \in U$ whose bundle in $A$ has value less than $y$. Then $\hdeg_A(i) \le k_0 - 2 < \hdeg_O(i)$. Consider an alternating path starting in $i$ starting with an $O$-edge. It ends in a node $j$ with $\hdeg_O(j) < \hdeg_A(j) \le k_0 - 1$. Then $\hdeg_O(j) <\hdeg_O(i)$ and the value of $O_j$ is at most $x$ and hence less than $y$. Augmenting the path to $O$ effectively moves a heavy good from $i$ to $j$ and decreases the distance between $A^H$ and $\OH$. Adding the light goods in $O_i \cup O_j$ greedily will result in an allocation whose $\NSW$ is at least the one of $O$ (Lemma~\ref{simple observation}). Contradiction.

It remains to show that bundles in $O$ have value at most $x + 1$. Assume there is a bundle of value $x + \sfrac{3}{2}$ or more in $O$. The bundle
  contains at most $k_0 - 1$ heavy goods and hence its heavy value is at most $x + 1$. So it contains a light good and hence $x_O = x + \sfrac{1}{2}$, and the bundle under consideration has value $x + \sfrac{3}{2}$.
The bundles in $O$ have values in $\{x + \sfrac{1}{2}, x + 1, x + \sfrac{3}{2}\}$. Any bundle of value $x + \sfrac{1}{2}$ can be turned into a bundle of value $x +\sfrac{3}{2}$ by moving a light good to it from a bundle of value $x + \sfrac{3}{2}$.

If $A$ is optimal, $O^H = A^H$ as $O$ is closest to $A$. Hence $x_O = x_A = x$, and there are no bundles of value $x + \sfrac{3}{2}$ in $O$. If $A$ is sub-optimal, there are more bundles of value $x + \sfrac{1}{2}$ in $O$ than in $A$. So there is an $i$ in $(O_{\sfrac{1}{2}} \cup O_{\sfrac{3}{2}}) \cap (A_0 \cup A_1)$. Then the parities of $\hdeg_A(i)$ and $\hdeg_O(i)$ are different.  We may assume $i \in O_{\sfrac{3}{2}}$.

Assume first that $\hdeg_O(i) > \hdeg_A(i)$. Then there exists an alternating path starting in $i$ with an $O$-edge. The path ends in $j$ with $\hdeg_O(j) < \hdeg_A(j) \le k_0 - 1$. The heavy value of $O_j$ is at most  $(k_0 - 2)s$ which is at most $x + 1 - s$. Since the value of $O_j$ is at least $x_O$, $O_j$ contains at least $x_O - (x + 1 - s) = s - \sfrac{1}{2} = \floor{s}$ light goods. 
If $w_j^O = x + \sfrac{1}{2}$, we augment the path, move $\floor{s}$ light goods  from $j$ to $i$ and improve $O$, a contradiction. If $w_j^O = x + 1$, we augment the path and move $\floor{s}$ light goods from $j$ to $i$. This does not change the utility profile of $O$ and moves $O$ closer to $A$, a contradiction. If $w_j^O = x + \sfrac{3}{2}$, $j$ contains at least $\ceil{s}$ light goods. We augment the path, move $\floor{s}$ light goods from $j$ to $i$ and one light good from $j$ to a bundle of value $x_O$. This improves $O$, a contradiction. 

Assume next that $\hdeg_O(i) < \hdeg_A(i) \le k_0 - 1$. Then there exists an alternating path starting in $i$ with an $A$-edge. The path ends in $j$ with $\hdeg_A(j) < \hdeg_O(j) \le k_0 - 1$. Since $O_i$ has value $x + \sfrac{3}{2}$ and $\hdeg_O(i) \le k_0 - 2$, $O_i$ contains at least $x + \sfrac{3}{2} - (x  + 1 - s) = \ceil{s}$ light goods. We augment the path to $O$ and remove $\ceil{s}$ light goods from $O_i$. So the value of $O_i$ becomes $x + 1$. If $O_j$ has value $x + \sfrac{3}{2}$, we put $\floor{s}$ light goods on $j$ and one light good on a bundle of value $x_O$. If $O_j$ has value $x + 1$ or $x + \sfrac{1}{2}$, we put $\ceil{s}$ light goods on $j$. This either improves $O$ or does not change the utility profile of $O$ and moves $O$ closer to $A$, a contradiction. 
\end{proof}

\begin{lemma}\label{containsalight}\label{$x + 1$ contains a light}\label{heavy-only $x + 1$} 
Consider a heavy-only agent $i \in A_1$. 
Then $O_i$ is heavy-only and has value $x + 1$. If all bundles in $A_1$ are heavy-only, $A$ is  optimal.
\end{lemma}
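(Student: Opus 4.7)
The proof is by contradiction: assuming $O_i$ is not a heavy-only bundle of value $x+1$, I will produce an allocation $\hat O$ with $\NSW(\hat O) \ge \NSW(O)$ and $|A^H \oplus \hat O^H| < |A^H \oplus O^H|$, contradicting the choice of $O$. Since $A_i$ is heavy-only of value $x+1$, $x+1$ is an integer multiple of $s$, so $k := \hdeg_A(i) = (x+1)/s$ is a positive integer. By Lemma~\ref{x0}, $w_i^O \le x+1$; hence ``$O_i$ is not heavy-only of value $x+1$'' forces $\hdeg_O(i) \cdot s < x+1$, and so $\hdeg_O(i) < k$.

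Pick an alternating path $P$ in $A^H \oplus O^H$ starting at $i$ with an $A$-edge, ending at some agent $j$ with $\hdeg_O(j) > \hdeg_A(j)$. Its edges alternate between $A^H$ and $E^H \setminus A^H$, so $P$ is also alternating in the lexmin sense, and the lexmin property of $A$ forces $\hdeg_A(j) \ge k-1$. Combined with $\hdeg_O(j) \le k_0 - 1$ from Lemma~\ref{agree on heavy} and $\hdeg_O(j) \cdot s \le w_j^O \le x + 1 = ks$ from Lemma~\ref{x0}, this pins $\hdeg_O(j) = k$, $w_j^O = x + 1$, $\hdeg_A(j) = k-1$, and $O_j$ heavy-only.

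If $w_i^O \ne x + 1$, Lemma~\ref{simple observation} applies to $(i,j)$ — the hypotheses $\hdeg_O(i) < \hdeg_O(j)$ and $w_i^O \ne w_j^O$ hold — so augmenting $P$ in $O^H$ and greedily re-adding the lights of $O_i \cup O_j$ yields the required $\hat O$. The harder case is $w_i^O = x + 1$ with $O_i$ not heavy-only: the parity of $\hdeg_O(i) s$ together with $\hdeg_O(i) < k$ forces $\hdeg_O(i) \le k - 2$, so $O_i$ holds at least $2s$ lights and at least two edge-disjoint alternating paths $P_1, P_2$ start at $i$ with $A$-edges. Their far endpoints $j_1, j_2$ must be distinct (otherwise two unpaired $O$-edges at a common $j$ would violate the lexmin bound $\hdeg_A(j) \ge k-1$), and the analysis above shows both $O_{j_r}$ are heavy-only of value $x+1$ with $\hdeg_A(j_r) = k-1$. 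When $\hdeg_O(i) = k-2$, I would augment both $P_1, P_2$, move $\floor{s}$ lights from the augmented $O_i$ to each of $j_1, j_2$, and drop the single remaining light on an auxiliary bundle $O_m$ of value at most $x$; the affected bundles then have values $(x+1,\,x+\sfrac12,\,x+\sfrac12,\,w_m^O+1)$, and a direct computation shows $\NSW(\hat O)/\NSW(O) = (x+\sfrac12)^2(w_m^O+1)/[(x+1)^2 w_m^O] > 1$ whenever $w_m^O \le x$; the general case $\hdeg_O(i) < k-2$ is analogous, augmenting all $k - \hdeg_O(i)$ paths.

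The main obstacle is the sub-case $x_O = x + \sfrac12$ permitted by Lemma~\ref{x0}, in which no bundle $O_m$ has value $\le x$ to absorb the surplus lights. Here the $2s$ excess lights of $O_i$ must be routed through bundles of value $x + \sfrac12$, and the reallocation must respect the parity constraints of Lemma~\ref{max number of heavy}, keep all values in $\{x-\sfrac12, x, x+\sfrac12, x+1\}$, and leave $\NSW$ non-decreasing; Lemma~\ref{global accounting} will be the main tool for verifying feasibility of the resulting light assignment. I expect this sub-case to carry most of the remaining technical work, in the same spirit as the chain arguments in the proof of Lemma~\ref{x0}. Finally, the second statement of the lemma is an immediate corollary: if all bundles in $A_1$ are heavy-only, the first part gives $|O_1| \ge |A_1|$, which combined with Lemma~\ref{less-more} (yielding $|O_1| < |A_1|$ whenever $A$ is sub-optimal) forces $A$ to be optimal.
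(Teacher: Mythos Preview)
Your argument has a genuine error at a load-bearing step. You invoke ``the lexmin property of $A$'' to deduce $\hdeg_A(j)\ge k-1$, but $A^H$ is \emph{not} the lexmin allocation of the heavy goods. The lexmin allocation is $H$; the allocation $A$ is obtained from $H$ by optimizing the small bundles (properties (1)--(4)), and that optimization may freely rearrange heavy goods within $N_s$. Nothing prevents an agent $i\in N_s$ from holding $k_0-1$ heavy goods in $A$ while an agent $j$, reachable from $i$ by an $A^H/\bar A$-alternating path, holds far fewer than $k_0-2$. Without $\hdeg_A(j)\ge k-1$ you cannot pin down $\hdeg_O(j)=k$, so the conclusion that $O_j$ is heavy-only of value $x+1$---on which both your $w_i^O\ne x+1$ case (via Lemma~\ref{simple observation}) and your multi-path construction rest---does not follow. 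Beyond this, the proof is explicitly unfinished: you defer the sub-case $x_O=x+\sfrac12$ entirely, and the ``analogous'' case $\hdeg_O(i)<k-2$ would require absorbing several surplus light goods rather than one.

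The paper sidesteps all of this by casing on $w_j^A$ rather than trying to determine $\hdeg_O(j)$ directly. From $\hdeg_A(j)<\hdeg_O(j)\le k_0-1$ one only gets $\hdeg_A(j)\le k_0-2$, but that already guarantees $A_j$ carries at least $\lfloor s\rfloor$ light goods. If $w_j^A\in\{x,x+1\}$, augmenting $P$ \emph{to $A$} and shifting those lights creates two new bundles of value $x+\sfrac12$, contradicting property~(3); if $w_j^A=x+\sfrac12$ and $A_j$ has more than $\lfloor s\rfloor$ lights, the same move swaps the values of $i$ and $j$ and produces a non-heavy-only bundle in $A_1$, contradicting property~(4). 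Only in the remaining sliver---$w_j^A=x+\sfrac12$ with exactly $\lfloor s\rfloor$ lights---is $\hdeg_A(j)=k_0-2$ forced, whence $\hdeg_O(j)=k_0-1$ and $O_j$ is indeed heavy-only of value $x+1$; there the paper augments the single path $P$ to $O$ and finishes with a short case analysis on $w_i^O$. Properties~(3) and~(4) of $A$ thus do the structural work you tried to extract from a lexmin property that $A$ does not enjoy.
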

\begin{proof} Consider a heavy-only bundle $A_i$ of value $x + 1$. Then $\hdeg_A(i) = k_0 - 1$ and $x + 1 = (k_0 - 1)s$. Assume for the sake of a contradiction that $O_i$ has either value less than $x +1$ or is not heavy-only. In either case, $\hdeg_A(i) > \hdeg_O(i)$ and hence $O_i$ contains at most $k_0 - 2$ heavy goods and thus at least $w_i^O - (x + 1 - s) = w_i^O - x - 1 + s$ light goods. If $i \in O_0 \cup O_1$, $O_i$ contains at most $k_0 - 3$ heavy goods as the parity of the number of heavy goods is the same as for $A_i$. 

  Consider an alternating path starting in $i$ with an $A$-edge and let $j$ be the other end of the path. Then $\hdeg_A(j) < \hdeg_O(j)$ and hence $A_j$ contains at most $k_0 - 2$ heavy goods; its heavy value is therefore at most $x + 1 - s$. Since the value of $A_j$ is at least $x$, $A_j$ contains  at least $\floor{s}$ light goods. At least $\ceil{s}$, if $A_j$ has value $x + 1$.

  If $w_j^A \in \sset{x,x+1}$, we augment the path to $A$, move $\floor{s}$ light goods from $j$ to $i$, and if $w_j^A = x + 1$, in addition, move one light good from $j$ to a bundle of value $x$. In this way, we increase the number of bundles in $\Ahalf$ by two, a contradiction to property (3). 

  If $w_j^A = x + \sfrac{1}{2}$ and $A_j$ contains at least $\ceil{s}$ light goods\footnote{It will contain at least $2s + \floor{s}$ light goods}, we augment the path to $A$, and move $\floor{s}$ light goods from $j$ to $i$. In this way, $i$ and $j$ interchange values and $A_j$ becomes a bundle of value $x + 1$ containing a light good, a contradiction to property (3).

If $w_j^A = x + \sfrac{1}{2}$ and  $A_j$ contains exactly $\floor{s}$ light goods, it contains exactly $k_0 - 2$ heavy goods. Then $O_j$ contains $k_0 - 1$ heavy goods and hence is a heavy-only bundle of value $x + 1$. If the value of $O_i$ is $x - \sfrac{1}{2}$, $O_i$ contains at least $s - \sfrac{3}{2}$ light goods. We augment the path to $O$ and move $s - \sfrac{3}{2}$ light goods from $i$ to $j$. This does not change the utility profile of $O$ and moves $O$ closer to $A$, a contradiction. If the value of $O_i$ is either $x$ or $x + \sfrac{1}{2}$, $O_i$ contains at least $\floor{s}$ light goods. We augment the path to $O$ and move $\floor{s}$ light goods from $i$ to $j$. This improves the $\NSW$ of $O$ if the value of $O_i$ is $x$ and does not change the utility profile of $O$ and moves $O$ closer to $A$, otherwise. If the value of $O_i$ is $x + 1$, $O_i$ contains at least $\ceil{s}$ light goods. We augment the path to $O$, move $\floor{s}$ light goods  from $i$ to $j$ and one light good from $i$ to a bundle of value $x$. This improves $O$. In either case, we have obtained a contradiction. 

If all bundles in $A_1$ are heavy-only, $A_1 \subseteq O_1$ and hence $A$ is optimal by Lemma~\ref{less-more}.
 \end{proof}

\newcommand{\Opm}{O_{\pm \sfrac{1}{2}}}

  We next decompose $A^H \oplus \OH$ into \emph{walk}s. Walks were previously used in studies of parity matchings~\cite{Akiyama-Kano}; see Section~\ref{Algorithmics}. For an edge in $A^H \setminus \OH$, we also say that the edge has type $A$. Similarly, edges in $\OH \setminus A^H$ have type $O$. We use $T \in \sset{A,O}$ for the generic type. Then $\bar{T}$ is the other type. For $i \in \Ahalf \cap \Opm$ the heavy parity is the same in both allocations; we call the node \emph{unbalanced} if the cardinalities of $A^H_i$ and $\OH_i$ are different. We pair the edges in $A^H_i \oplus \OH_i$, first edges of different types and then the remaining either $A$- or $O$-edges (there is an even number of them). Goods have degree zero or two in $\OH \oplus \AH$. For goods of degree two, we pair the incident $A$- and $O$- edge. In this way, we form walks. The endpoints of these walks are agents in $A_0 \cup A_1 \cup O_0 \cup O_1$. The interior nodes of the walks lie in $\Ahalf \cap \Opm$. We distinguish two kinds of interior nodes: through-nodes and hinges. In through-nodes the two incident edges have different types, in hinges the incident edges have the same type. Hinges are unbalanced. Depending on the type, the hinge is an $O$- or $A$-hinge. The types of the hinges along a walk alternate. Endpoints are also either $A$- or $O$- endpoints depending on the type of the incident edge. The paths between hinges are alternating. Imagine, we follow the walk from one endpoint to the other. If the first endpoint has type $T$, the first hinge will have type $\bar{T}$, the second hinge will have type $T$, the last hinge and second endpoint will have type $\bar{T}$ and $T$ or $T$ and $\bar{T}$ depending on whether the number of hinges is odd or even. 

Since there are more agents of value $x + \sfrac{1}{2}$ in $O$ than in $A$, there will be a walk $W$ having an endpoint $i \in \Ohalf \cap (A_0 \cup A_1)$. Let $j$ be the other endpoint. Then $j \in A_0 \cup A_1 \cup O_0 \cup O_1$. We will show: If $j \in A_0 \cup A_1$, $A$ does not satisfy its defining properties, and if $j \in (O_0 \cup O_1) \cap \Ahalf$, $O$ is not closest to $A$. .
Hinge nodes lie in $A_{\sfrac{1}{2}} \cap O_{\pm \sfrac{1}{2}}$. Hinge nodes actually lie in $O_{\sfrac{1}{2}}$ and $T$-hinges own at least $2s$ light goods in the allocation $\bar{T}$ for $T \in \{A,O\}$ as we show next.

\begin{lemma}[Lemma 23 in~\cite{NSW-twovalues-halfinteger}]\label{lights in hinges} All hinge nodes of $W$ belong to $O_{\sfrac{1}{2}}$, $A$-hinges own at least $2s$ light goods in $O$ and $O$-hinges own at least $2s$ light goods in $A$. \end{lemma}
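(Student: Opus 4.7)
The plan has two parts: first use a direct count to reduce claims (2) and (3) to claim (1), then exclude the three forbidden values of $w^O_h$ by local modifications of $O$ along $W$.

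Fix an $A$-hinge $h$ of $W$. Since two $A$-edges are paired at $h$, we have $\hdeg_A(h) \ge \hdeg_O(h) + 2$. Because $h$ is an interior node of $W$, $h \in A_{\sfrac{1}{2}} \cap O_{\pm \sfrac{1}{2}}$, so $w^A_h = x + \sfrac{1}{2}$ and $s\,\hdeg_A(h) \le x + \sfrac{1}{2}$. Writing $\ell_O(h) = w^O_h - s\,\hdeg_O(h)$ and using $\ell_A(h) = (x + \sfrac{1}{2}) - s\,\hdeg_A(h) \ge 0$, we obtain
\[\ell_O(h) \ge w^O_h - s(\hdeg_A(h) - 2) \ge 2s + w^O_h - (x + \tfrac{1}{2}).\]
Hence if $w^O_h = x + \sfrac{1}{2}$ then $\ell_O(h) \ge 2s$, and the symmetric calculation starting from $\hdeg_O(h) \ge \hdeg_A(h) + 2$ handles $O$-hinges. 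So everything reduces to showing $w^O_h = x + \sfrac{1}{2}$ for every hinge.

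To exclude $w^O_h \in \sset{x - \sfrac{1}{2}, x, x + 1}$, I pick one of the two paired $A$-edges at $h$ and traverse $W$ backward from $h$ along it until reaching the first hinge or endpoint of $W$, call it $p$. The resulting sub-walk is an alternating path in $A^H \oplus O^H$ ending at $h$ with an $A$-edge, so augmenting it to $O$ moves one heavy good into $O_h$ while stripping one from $O_p$. The weak bound above already gives $\ell_O(h) \ge 2s - 1$ in every remaining case, which suffices to transfer $\floor{s}$ or $\ceil{s}$ light goods off $h$ (either onto $p$ to hold its weight up when $p$ loses a heavy good, or onto a bundle of value $x_O$); feasibility of such a redistribution is certified by Lemma~\ref{global accounting}. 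I then case-split on $w^O_h$ and on the status of $p$, and verify that the modified allocation $O'$ either strictly beats $O$ in $\NSW$, contradicting optimality, or has the same utility profile but strictly smaller $|A^H \oplus O'^H|$, contradicting the closeness of $O$ to $A$. The $O$-hinge case is the same with $A$ and $O$ swapped.

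The main obstacle will be the case analysis at $p$: both its type (endpoint versus hinge) and the discrepancies $w^A_p - w^O_p$ and $\hdeg_A(p) - \hdeg_O(p)$ determine which of the two contradictions arises and how the off-loaded light goods should be placed. Lemma~\ref{simple observation} and the refined light-good bound $\ell_O(h) \ge 2s - 1$ derived above are the two tools that let each sub-case close uniformly.
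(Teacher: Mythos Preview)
Your reduction of claims (2) and (3) to claim (1) via the direct count is correct and matches the paper's final paragraph. However, the plan for (1) has a genuine gap.

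First, a minor point: by the walk decomposition, interior nodes of $W$ already lie in $A_{\sfrac{1}{2}} \cap O_{\pm\sfrac{1}{2}}$, so $w^O_h \in \{x - \sfrac{1}{2},\, x + \sfrac{1}{2}\}$ and you only need to exclude $x - \sfrac{1}{2}$; the values $x$ and $x+1$ never occur at a hinge.

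The substantive problem is the case analysis at $p$. Suppose $h$ is an $A$-hinge with $w^O_h = x - \sfrac{1}{2}$ and the adjacent node $p$ along $W$ is an $O$-hinge that \emph{also} has $w^O_p = x - \sfrac{1}{2}$. After augmenting the alternating sub-path to $O$, both $h$ and $p$ have their heavy-degree parity flipped, so any feasible target for either must lie in $\{x-1,x,x+1,\ldots\}$. But their combined $O$-value was $2x-1$, so you cannot target both to $x$ without stealing a light good from a third bundle, and the targets $(x,x-1)$ or $(x-1,x)$ satisfy $x(x-1) < (x-\sfrac{1}{2})^2$ and hence \emph{decrease} the product. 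Your bound $\ell_O(h)\ge 2s-1$ and Lemma~\ref{simple observation} do not resolve this two-against-one situation, and the proposal gives no mechanism for it.

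The paper sidesteps exactly this difficulty with a two-step argument. First, take two \emph{consecutive} hinges whose $O$-values differ (so one is $x+\sfrac{1}{2}$ and the other $x-\sfrac{1}{2}$); their sum is $2x$, hence targeting both to $x$ is value-preserving, feasible by Lemma~\ref{global accounting}, and strictly improves $\NSW$ --- contradiction. Thus all hinges share a common $O$-value. Second, the sub-path from the endpoint $i$ (which lies in $O_{\sfrac{1}{2}}$ by the choice of $W$) to the first hinge pins that common value to $x+\sfrac{1}{2}$. Pairing a hinge of value $x-\sfrac{1}{2}$ with a node of value $x+\sfrac{1}{2}$ is precisely what makes the target values balance; your per-hinge scheme does not arrange this.

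Finally, the sentence ``the $O$-hinge case is the same with $A$ and $O$ swapped'' is dangerous: you cannot modify $A$ and derive a contradiction to the optimality or closeness of $O$. For an $O$-hinge you must still augment to $O$ (now $h$ loses a heavy good and $p$ gains one), and the same unresolved sub-case --- $p$ an $A$-hinge with $w^O_p = x-\sfrac{1}{2}$ --- reappears.
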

\begin{proof} By  definition, hinge nodes are unbalanced and lie in
  $A_{\sfrac{1}{2}} \cap O_{\pm \sfrac{1}{2}}$. Consider two consecutive hinges $h$ and $h'$ and the alternating path $P$ connecting them. Assume that their values in $O$ differ by one, i.e, one has value $x + \sfrac{1}{2}$ and the other value $x - \sfrac{1}{2}$. We change $O^H$ to $O^H \oplus P$ and use Lemma~\ref{global accounting} to show that the modified heavy allocation can be extended to a full allocation. 
We define the target value of both endpoints as $x$ and the target value of all other agents as their value in $A$. Then the sum of the values is unchanged. The $A$-endpoint of $P$ receives an additional heavy good; this is fine as $\hdeg_O(h) < r_{\sfrac{1}{2}}$ and hence $\hdeg_O(h) \le r_{\sfrac{1}{2}} - 2$ and therefore $\hdeg_O(h) + 1 \le r_{\sfrac{1}{2}} + 1 \le r_0$ according to Lemma~\ref{max number of heavy}. The $O$-endpoint loses a heavy good; this is fine since $\hdeg_O(h) > 0$. The heavy degree of all other nodes stays unchanged. The modified allocation has smaller $\NSW$ than $O$, a contradiction. We have now shown that all hinge nodes have the same value in $O$.

It remains to show that the first hinge of the walk lies in $O_{\sfrac{1}{2}}$; call it $h$. Assume $h \in O_{-\sfrac{1}{2}}$. We distinguish cases according to whether $i$ is $A$-heavy or not. We define the target value of $i$ and $h$ as $x$. Then again the sum of the values does not change. If an allocation with the desired target values exists, it has larger $\NSW$ than $O$, a contradiction. 

If $i$ is $O$-heavy, $h$ is $A$-heavy. We augment $P$ to $O$. Then $i$ looses a heavy good (this is fine as $\hdeg_O(i) > 0$) and $h$ gains a heavy good; this is fine as $h$ is an $A$-hinge and therefore $\hdeg_O(i) \le \hdeg_A(i) - 2 \le r_{\sfrac{1}{2}} - 2$ and hence  $\hdeg_O(i) + 1 \le r_{\sfrac{1}{2}} - 1 \le r_0$. 

If $i$ is $A$-heavy, $h$ is $O$-heavy. We augment $P$ to $O$. Then $h$ looses a heavy good; this is fine as $\hdeg_O(h) > 0$. The agent $i$ gains a heavy good; this is fine as $i$ is an $A$-endpoint and therefore $\hdeg_O(i) \le \hdeg_A(i) - 2 \le r_{\sfrac{1}{2}} - 2$ and hence  $\hdeg_O(i) + 1 \le r_{\sfrac{1}{2}} - 1 \le r_0$. 

We have now established that all hinges are unbalanced and belong to $A_{\sfrac{1}{2}} \cap O_{\sfrac{1}{2}}$. For a $T$-hinge $h$, $\hdeg_{\bar{T}}(v) \le \hdeg_T(v) - 2$. Thus $h$ contains at least $2s$ light goods in $\bar{T}$.\end{proof}

  \begin{lemma} Any walk $W$ in the decomposition of $A^H \oplus \OH$ is a simple path, i.e., no agent occurs twice on $W$. \end{lemma}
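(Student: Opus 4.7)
The plan is to assume for contradiction that some walk $W$ visits an agent $v$ twice. Then $W$ uses at least two pairs at $v$, and we may decompose $W = W_1 \cdot v \cdot W_2 \cdot v \cdot W_3$, where $W_2$ is the closed sub-walk of $W$ between the first two visits to $v$ (chosen so that its interior does not revisit $v$; $W_1$ or $W_3$ may be empty if $v$ is also an endpoint of $W$).

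I would then augment $W_2$ in the heavy allocation by setting $O'^H := O^H \oplus E(W_2)$ and redistribute the light goods via Lemma~\ref{global accounting} to obtain a feasible allocation $O'$. The bookkeeping at the hinges of $W_2$ is handled by Lemma~\ref{lights in hinges}: each $A$-hinge of $W_2$ gains two heavy goods but owns at least $2s$ light goods in $O$, which we remove to keep its value at $x + \sfrac{1}{2}$; each $O$-hinge of $W_2$ loses two heavy goods and receives $2s$ fresh light goods to restore its value; through-nodes of $W_2$ have unchanged heavy degree and are untouched.

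The core case is $\Delta_v := |E(W_2) \cap A^H_v| - |E(W_2) \cap O^H_v| = 0$, i.e., the two edges of $W_2$ incident to $v$ have different types. Then $v$'s heavy degree is unchanged, we leave its light goods intact, and $O'$ has exactly the utility profile of $O$. Hence $\NSW(O') = \NSW(O)$, but $|A^H \oplus O'^H| = |A^H \oplus O^H| - |E(W_2)|$ is strictly smaller, contradicting the choice of $O$ as the optimal allocation closest to $A$.

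In the remaining cases $\Delta_v = \pm 2$, the heavy value at $v$ changes by $\pm 2s$ under augmentation. The main obstacle is to show that this $\pm 2s$ change can still be absorbed by a feasible light-good redistribution so that $O'$ satisfies $\NSW(O') \geq \NSW(O)$ and is either strictly closer to $A$ (contradicting closeness) or has strictly more $(x + \sfrac{1}{2})$-bundles (contradicting optimality via~(\ref{large mhalf is good})). This is accomplished by a short case analysis distinguishing whether $v$ is an endpoint or an interior node of $W$, using the value bounds from Lemmas~\ref{x0} and~\ref{xo} and the slack at hinges of $W_2$ supplied by Lemma~\ref{lights in hinges}, together with the optimality properties (3) and (4) of $A$. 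In either case a contradiction ensues, establishing simplicity of $W$.
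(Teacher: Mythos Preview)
Your treatment of the case $\Delta_v = 0$ is correct and coincides with the paper's argument when the first and last edge of $W_2$ have different types. The gap is the case $\Delta_v = \pm 2$: you merely announce a ``short case analysis'' without supplying it, and the lemmas you list do not by themselves absorb the $\pm 2s$ imbalance at $v$.

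The paper avoids this extra case entirely by a structural observation you are missing: when the two $W_2$-edges at $v$ have the same type, $v$ itself functions as a hinge of the closed cycle $W_2$, and then the cycle carries an \emph{even} number of hinges in total (the genuine hinges on $W_2$ are odd in number in this case). Since hinge types alternate around the cycle, exactly half are $A$-hinges and half are $O$-hinges; pairing each $A$-hinge with an $O$-hinge and moving $2s$ light goods across each pair keeps the utility profile of $O$ unchanged, while augmenting the whole cycle strictly shrinks $|A^H \oplus O^H|$. This is precisely the mechanism of your $\Delta_v = 0$ case, so no separate analysis is needed once $v$ is recognised as the balancing hinge.

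One further remark: the contradiction obtained here is always to the \emph{closeness} of $O$ to $A$ (same utility profile, strictly smaller symmetric difference). Your alternative target of ``strictly more $(x+\sfrac{1}{2})$-bundles, contradicting optimality via~(\ref{large mhalf is good})'' is misdirected: you are modifying $O$, which is already optimal, so its $\NSW$ cannot increase; and properties (3) and (4) concern $A$, which is untouched in this argument.
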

  \begin{proof} Assume otherwise, say $W$ visits a vertex $i$ twice. Consider the path $W'$ between the two occurences of $i$. The first and the last edge of $W'$ have either different types or the same type. If they have the same type, $W'$ contains an odd number of hinges and hence together with $i$ forms a cycle with an even number of hinges. If the first edge and the last edge have different types, $W'$ contains an even number of hinges, maybe zero. If there is no hinge, $W'$ is simply an alternating cycle. If there is a hinge, we combine the alternating path from $i$ to the first hinge on $W'$ and the alternating path from the last hinge of $W'$ to $i$ into a single alternating path. So in either case, $W'$ is a cycle with an even number of hinges that are connected by alternating paths. Half of the hinges are $A$-hinges and half are $O$-hinges. Let us pair the hinges so that each pair contains an $O$-hinge and an $A$-hinge. We augment $W'$ to $O$. Through-nodes gain and loose a heavy good and hence the value of their bundle does not change. $O$-hinges loose two heavy edges and $A$-hinges gain two heavy edges.
Consider a pair $(h,\ell)$ of $A$- and $O$-hinge. An $A$-hinge owns at least $2s$ light goods in $O$. We give these goods to $\ell$. Then the utility profile of $O$ does not change and $O$ moves closer to $A$, a contradiction. \end{proof}

At this point, we have established the existence of a walk $W$ with endpoint $i \in (A_0 \cup A_1) \cap O_{\sfrac{1}{2}}$. All hinge nodes belong to $A_{1/2} \cap O_{1/2}$ and $T$-hinges own $2s$  light goods in the allocation $\bar{T}$. We will next show that we can use $W$ to either improve $A$ or derive a contradiction to the assumption that $O$ is closest to $A$. We may assume that there is a bundle of value $x + 1$ in $A$ containing a light good. Otherwise, $A$ is optimal (Lemma~\ref{containsalight}) and we are done. %We  distinguish cases according to whether $i$ is $O$-heavy or $A$-heavy.
Let $j$ be the other endpoint of the walk. Since $W$ is simple, $j \not= i$.

Assume $i$ is $T$-heavy, i.e. $\hdeg_T(i) > \hdeg_{\bar{T}}(i)$. The value of $A_i$ is $x$ or $x + 1$ and the value of $O_i$ is $x + \sfrac{1}{2}$. The types of the hinges alternate along $W$, the type of the first (last) hinge is opposite to the type of $i$ ($j$). Each $A$-hinge holds $2s$ light goods in $O$ and each $O$-hinge holds $2s$ light goods in $A$ (Lemma~\ref{lights in hinges}). If the types of $i$ and $j$ differ, the number of hinges is even, if the types are the same, the number of hinges is odd. There is also a bundle of value $x + 1$ in $A$ containing a light good and there is a bundle of value $x$ in $A$. 

We distinguish two cases: $j \in A_0 \cup A_1$ and $j \in (O_0 \cup O_1) \cap A_{\sfrac{1}{2}}$. In the first case, we show how to improve $A$ and in the other case we will derive a contradiction to the assumption that $O$ is closest to $A$.

\paragraph{Case $j \in A_0 \cup A_1$:} We augment the walk to $A$. The heavy parity of $i$ and $j$ changes and the heavy parity of all nodes, including the hinges, does not change. Each $O$-hinge releases $2s$ light goods and each $A$-hinge requires $2s$ light goods. An $A$-endpoint looses a heavy good and an $O$-endpoint gains a heavy good.

We define the target values of $i$ and $j$ as $\tw_i = \tw_j = x + \sfrac{1}{2}$. If $w^A_i = w^A_j = x$, let $k$ be an agent owning a bundle of value $x + 1$ containing a light good and define $\tw_k = x$. If $w^A_i = w^A_j = x + 1$, let $k$ be an agent owning a bundle of value $x$ and define $\tw_k = x + 1$. If $\sset{w^A_i,w^A_j} = \sset{x, x + 1}$, leave $k$ undefined. For all other agents, their target value is their value in $A$. Then the sum of the target values is the same as the total value of the bundles in $A$.

We next define the number of heavy goods allocated to each agent in the new allocation. For an $A$-hinge $h$ the number decreases by two; since $\hdeg_A(h) \ge \hdeg_0(h) + 2 \ge 2$, this is fine. For an $O$-hinge $h$ the number increases by two (since $\hdeg_A(h) + 2 \le \hdeg_O(h) \le r_{\sfrac{1}{2}}$ this is fine), for an $O$-endpoint the number increases by one; if $i$ is an $O$-endpoint, this is fine since $\hdeg_A(i) < \hdeg_O(i) \le r_{\sfrac{1}{2}}$; if $j$ is an $O$-endpoint and $j \in \Ohalf$, the same reasoning applies; if $j$ is an $O$-endpoint and $j \in O_0 \cup O_1$, $\hdeg_A(j) < \hdeg_O(j)$ implies $\hdeg_A(j) + 2 \le \hdeg_O(j) \le r_1$ and hence $\hdeg_A(j) + 1 \le r_{\sfrac{1}{2}}$. For an $A$-endpoint $h$ the number decreases by one; this is fine, since $\hdeg_A(h) > \hdeg_O(h) \ge 0$. Finally, the total number of allocated heavy goods stays the same. This holds since the number of hinges is even if the endpoints have differnt types, and since there is an additional $T$-hinge if both endpoints have type $\bar{T}$. 

The total number of allocated heavy goods stays the same, the total value of the bundles stays the same, and each bundle contains at most the maximum feasible number of heavy goods. Therefore, the number of light goods is precisely the number of light goods needed to fill all bundles to their target value (Lemma~\ref{global accounting}). The transformation improves the $\NSW$ of $A$ and the values of all bundles in the new allocation lie in $\sset{x, x+ \sfrac{1}{2}, x + 1}$, a contradiction to the optimality of $A$ under these constraints.

\paragraph{Case $j \in (O_0 \cup O_1) \cap A_{\sfrac{1}{2}}$:} We augment the walk to $O$. For the intermediate nodes, the heavy parity does not change. For $i$ and $j$ the heavy parity changes. We define the target values as $\tw_i = w^O_j$ and $\tw_j = w^O_i$. For all other agents the target value is their value in $O$. So the sum of the target values is the total value of the bundles in $O$ and the utility profile does not change.

If $i$ and $j$ have different types, the number of hinges is even. If $i$ and $j$ have type $T$, there is an extra hinge of type $\bar{T}$. If $T = A$, the endpoints gain one heavy good each and the hinge looses two heavy goods. If $T = O$, the endpoints loose one heavy good each and the hinge gains two heavy goods. So the sum of the heavy goods in all bundles does not change.

It remains to argue that no bundle contains too many heavy goods. A $O$-endpoint $h$ looses a heavy good; this is fine as $\hdeg_O(h) > 0$ for an $O$-endpoint. An $A$-endpoint gains a heavy good. If $j$ is an $A$-endpoint, $\hdeg_O(j) < \hdeg_A(j) \le r_{\sfrac{1}{2}}$ and hence $\hdeg_O(j) + 1 \le r_{\sfrac{1}{2}}$; this is fine, since $\tw_j = w^O_i = x + \sfrac{1}{2}$. If $i$ is an $A$-endpoint and $A_i$ has value $x$, $\hdeg_O(i) < \hdeg_A(i) \le r_0$ and hence $\hdeg_O(i) + 1 \le r_0$. This is fine, since $\tw_i = w_j^O \in \sset{x, x + 1}$. If $i$ is an $A$-endpoint and has value $x + 1$, $A_i$ cannot be heavy-only since then $O_i$ would also have value $x + 1$ and be heavy-only according to Lemma~\ref{heavy-only $x + 1$}. Hence $\hdeg_O(i) \le \hdeg_A(i) - 2 \le r_1 - 2$. So $\hdeg_O (i) + 1 \le r_1 -1 \le r_O$, and we are fine. 

Thus the utility profile of $O$ does not change and $O$ moves closer to $A$, a contradiction.

\begin{theorem} $A$ is optimal.\end{theorem}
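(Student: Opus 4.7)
The plan is a proof by contradiction that ties together the preceding lemmas with the walk-decomposition case analysis already carried out above. Suppose $A$ is not optimal, so $O \neq A$. The earlier lemmas have already reduced the disagreement to $N_s$, and Lemma~\ref{less-more} guarantees that $O$ has strictly more bundles of value $x + \half$ than $A$ does. This forces the existence of a walk $W$ in the decomposition of $A^H \oplus O^H$ with an endpoint $i \in O_{\half} \cap (A_0 \cup A_1)$. By Lemma~\ref{containsalight} we may further assume that $A_1$ contains a bundle with a light good, since otherwise $A$ is already optimal and there is nothing to prove. Let $j$ be the other endpoint of $W$; since $W$ is simple we have $j \neq i$, and by the structural results $j$ lies in $A_0 \cup A_1 \cup O_0 \cup O_1$.

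The proof then splits on $j$, exactly as in the two cases laid out in the text above. If $j \in A_0 \cup A_1$, augmenting $W$ to $A$ and reassigning the light goods held by hinges, justified by Lemma~\ref{global accounting} together with the heavy-degree bounds from Lemma~\ref{max number of heavy}, yields a feasible reallocation with strictly more bundles of value $x + \half$ than $A$, contradicting the defining property~(3) of $A$; the auxiliary agent $k$ with adjusted target $\tw_k \in \{x, x + 1\}$ is what preserves the total value. If $j \in (O_0 \cup O_1) \cap A_{\half}$, the symmetric augmentation of $W$ to $O$ preserves $O$'s utility profile but strictly decreases the cardinality of $A^H \oplus O^H$, contradicting the choice of $O$ as the optimum closest to $A$. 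Either way a contradiction ensues, so $A$ must be optimal.

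The main obstacle in turning this outline into a rigorous argument is verifying the heavy-degree feasibility bounds $\hdeg \le r_d$ at every modified node of $W$ after each augmentation. Hinge nodes are controlled by Lemma~\ref{lights in hinges}, and through-nodes have unchanged heavy count, but the endpoints require case-by-case care: in particular, when $i$ is an $A$-endpoint of value $x + 1$ about to gain a heavy good, one must invoke Lemma~\ref{containsalight} to exclude the heavy-only possibility, since otherwise the parity-plus-degree bound would fail. Once these bookkeeping checks are in place, and once one confirms that the total heavy-good count is conserved (using the parity matching between endpoint types and the number of hinges of each type), the theorem follows immediately from the two case contradictions.
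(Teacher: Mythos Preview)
Your proposal is correct and follows essentially the same approach as the paper: the theorem is a short wrap-up of the walk case analysis already carried out in the text, deriving a contradiction either to property~(3) of $A$ (when $j \in A_0 \cup A_1$) or to the minimality of $\abs{A^H \oplus O^H}$ (when $j \in (O_0 \cup O_1) \cap A_{\half}$). Your write-up is in fact more explicit than the paper's own two-line proof, which only mentions the first case and leaves the second implicit in the preceding discussion.
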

\begin{proof} Assume $O$ is closest to $A$ and $A$ is sub-optimal. Then there is an improving walk $W$. Augmentation of $W$ to $A$ moves two agents from $A_0 \cup A_1$ to $A_{\sfrac{1}{2}}$ and does not change the value of any other bundle, a contradiction to the assumption that the number of bundles of value $x + \sfrac{1}{2}$ in $A$ is maximum.
\end{proof}

  \section{The Algorithmics of Optimizing the Small Bundles}\label{Algorithmics}

  Let $x$ be the minimum value of a bundle after the greedy addition of the light goods. Then only bundles of value $x$, $x + \sfrac{1}{2}$ and $x + 1$ can contain small goods. Let $N_s$ be the owners of the bundles of value at most $x + 1$ and let $\Gamma(N_s)$ be the goods assigned to $N_s$ in $A$. We need to construct the allocation $C$ of the goods in $\Gamma(N_s)$ to $N_s$ maximizing the number of bundles of value $x + \sfrac{1}{2}$.
All alternating paths in this section are with respect to $A$ and $\bar{A}$.

  \begin{lemma}\label{easy case} Let $A$ be an allocation with all values in $\{x, x + \sfrac{1}{2}, x + 1\}$. If there are either no bundles of value $x$ or no bundles of value $x + 1$, $A$ is optimal. \end{lemma}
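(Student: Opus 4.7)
The plan is to reduce the lemma to the algebraic identity captured in equation~(\ref{large mhalf is good}), which expresses the NSW of any small-bundle allocation with bundle values in $\{x, x + \half, x + 1\}$ as a monotone increasing function of $\nhalf$ once $n_s$, $S$, and $x$ are fixed. Since every competing allocation of $\Gamma(N_s)$ to $N_s$ distributes the same multiset of goods among the same agents, it shares the same total value $S$ and bundle count $n_s$ as $A$. So ``$A$ is optimal'' reduces to showing that $A$ maximizes $\nhalf$ among such allocations.

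The second step is a single line of linear algebra. Write $n_0'$, $\nhalf'$, $n_1'$ for the counts of a competing allocation. The two conservation equations
\[ n_0' + \nhalf' + n_1' = n_s, \qquad n_0' x + \nhalf' (x + \half) + n_1'(x + 1) = S\]
can be solved to give $n_0' = n_s(x+1) - S - \nhalf'/2$ and $n_1' = S - n_s x - \nhalf'/2$. Non-negativity of $n_0'$ and $n_1'$ then forces
\[ \nhalf' \;\le\; 2\min\bigl(n_s(x+1) - S,\; S - n_s x\bigr).\]

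The hypothesis of the lemma immediately yields the conclusion. If $A$ has $n_0 = 0$, the formula above plugged back into $A$ gives $\nhalf = 2(n_s(x+1) - S)$, matching one of the two upper bounds; moreover $n_0 = 0$ forces the average bundle value of $A$ to be at least $x + \half$, i.e.\ $S \ge n_s(x + \half)$, which guarantees that this bound is the tighter one. Hence $A$ attains the maximal $\nhalf$, so by (\ref{large mhalf is good}) it maximizes NSW. The case $n_1 = 0$ is symmetric, using the other bound.

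I do not anticipate any real obstacle; the argument is pure counting on the utility profile. The only subtlety worth noting is the tie-breaking rule~(4) on not-heavy-only $(x+1)$-bundles, but once one of $n_0, n_1$ is zero the triple $(n_0, \nhalf, n_1)$ is uniquely determined by $(n_s, S, x)$, so no rearrangement can change the utility profile; any further optimization amounts to moving light goods among equal-value bundles, which leaves both the NSW and the not-heavy-only count trivially fixable.
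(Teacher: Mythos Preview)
Your proof is correct and follows essentially the same route as the paper's. The paper's one-line proof (``the number of bundles of each type is determined by the total value of the bundles and the number of bundles'') compresses exactly the linear algebra you spell out: with $n_s$ and $S$ fixed, setting $n_0 = 0$ (resp.\ $n_1 = 0$) pins down the remaining two counts, and any competing allocation with larger $\nhalf$ would force the corresponding count negative. Your version is more explicit about why the achieved bound is the binding one, but the underlying argument is identical.
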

  \begin{proof} In either case, the number of bundles of each type is determined by the total value of the bundles and the number of bundles. \end{proof}

 \begin{lemma}\label{create $x + 1$ containing a light} Assume all bundles in $A_1$ are heavy-only. If $A$ is sub-optimal there is an alternating path starting with an $A$-edge from an agent in $A_1$ and ending in either $A_0$ or in an agent in $A_{\sfrac{1}{2}}$ owning more than $\floor{s}$ light goods. \end{lemma}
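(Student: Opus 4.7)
The plan is to extract the path by comparing $A$ with an optimal allocation $O$ that is closest to $A$, i.e., minimizes $|A^H \oplus O^H|$ among allocations of $\Gamma(N_s)$ to $N_s$ with bundle values in $\sset{x, x+\sfrac{1}{2}, x+1}$ and the maximum number of $x + \sfrac{1}{2}$-bundles. Since $A$ is sub-optimal, Lemma~\ref{less-more} yields $|O_1| < |A_1|$, so there exists $i \in A_1 \setminus O_1$. Because $A_i$ is heavy-only we have $\hdeg_A(i) = k_0 - 1$ (so in particular $(k_0 - 1) s = x + 1$), while $w^O_i < x + 1$ forces $\hdeg_O(i) \le k_0 - 2$. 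Consequently an alternating path $P$ in $A^H \oplus O^H$ (hence also an alternating path with respect to $A^H$ and $\bar{A}$) starts at $i$ with an $A$-edge; let $j$ be its other endpoint, so $\hdeg_A(j) < \hdeg_O(j) \le k_0 - 1$.

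I would then classify $j$. The bound $\hdeg_A(j) \le k_0 - 2$ combined with the heavy-only assumption rules out $j \in A_1$. If $j \in A_0$, the lemma is proved. Otherwise $j \in A_{\sfrac{1}{2}}$, and a parity check (since $(k_0 - 1) s = x + 1$ is integer but $s$ is a proper half-integer) forces $\hdeg_A(j)$ to be odd, so $\hdeg_A(j) \in \sset{k_0 - 2, k_0 - 4, \ldots}$. When $\hdeg_A(j) \le k_0 - 4$, the number of light goods in $A_j$ equals $x + \sfrac{1}{2} - \hdeg_A(j) \cdot s \ge 3s - \sfrac{1}{2} > \floor{s}$, as required.

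The main obstacle is to eliminate the remaining possibility $\hdeg_A(j) = k_0 - 2$, which I plan to do by contradicting the choice of $O$. In this case $\hdeg_O(j) = k_0 - 1$, and Lemma~\ref{xo} forces $O_j$ to be heavy-only of value $x + 1$. Augment $P$ in $O^H$ and set target values $T_i = w^O_i + \sfrac{1}{2}$, $T_j = x + \sfrac{1}{2}$, leaving $T_v = w^O_v$ for every other $v$. Since $w^O_i \in \sset{x, x + \sfrac{1}{2}}$, $T_i$ lies in $\sset{x + \sfrac{1}{2}, x + 1}$ and the total target value equals the total value of the goods. Using Lemma~\ref{max number of heavy} one checks the feasibility bounds on the new heavy degrees ($\hdeg_{O'}(j) = k_0 - 2 = r_{\sfrac{1}{2}}$, and $\hdeg_{O'}(i) = \hdeg_O(i) + 1$ fits under $r_1$ when $w^O_i = x + \sfrac{1}{2}$ and under $r_{\sfrac{1}{2}}$ when $w^O_i = x$, via the parity constraints on $\hdeg_O(i)$). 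Lemma~\ref{global accounting} then redistributes the light goods to realize the targets. If $w^O_i = x + \sfrac{1}{2}$, the resulting $O'$ has the same utility profile as $O$ but smaller $|A^H \oplus O'^H|$, contradicting closeness; if $w^O_i = x$, $O'$ gains two additional $x + \sfrac{1}{2}$-bundles at the expense of one $x$- and one $x+1$-bundle, contradicting optimality. Either way the case $\hdeg_A(j) = k_0 - 2$ is ruled out, completing the proof.
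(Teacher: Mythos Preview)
Your argument is correct and works, but it is genuinely different from the paper's proof. The paper argues the contrapositive directly, without invoking any optimal comparison allocation: it considers the set $U$ of agents reachable from $A_1$ by alternating paths with respect to $A^H$ and $\bar A$, observes that if no path of the required kind exists then every heavy good owned inside $U$ is light for every agent outside $U$, and that each agent in $U\cap A_{\sfrac12}$ must absorb exactly $\floor{s}$ light goods; hence the agents outside $U$ already receive the maximum possible heavy and light value, their bundles lie in $\{x,x+\sfrac12\}$, and $A$ is optimal. Your approach instead manufactures the path from $A^H\oplus O^H$ for a closest optimal $O$, reusing the Section~3 machinery (symmetric-difference decomposition, target values, Lemma~\ref{global accounting}). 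The paper's route is shorter and self-contained within Section~\ref{Algorithmics}; yours is more uniform with the rest of the paper and makes the provenance of the path explicit.

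Two small inaccuracies to clean up. First, your parity justification ``$(k_0-1)s=x+1$ is an integer'' need not hold; what you actually need (and what is true) is that $\hdeg_A(j)$ and $k_0-1$ have opposite parity because $w_j^A-w_i^A=-\sfrac12$ forces $(\hdeg_A(j)-(k_0-1))s\equiv \sfrac12\pmod 1$. Your conclusion $\hdeg_A(j)\in\{k_0-2,k_0-4,\ldots\}$ is correct either way. Second, the appeal to Lemma~\ref{xo} is out of place: your $O$ is by definition restricted to values in $\{x,x+\sfrac12,x+1\}$, so once $\hdeg_O(j)=k_0-1$ the bundle $O_j$ is heavy-only of value $x+1$ immediately, no lemma needed. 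Similarly, citing Lemma~\ref{less-more} is overkill; from your definition of $O$ and equation~(\ref{large mhalf is good}) the inequality $|O_1|<|A_1|$ is a one-line count.
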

 \begin{proof} Since $x + 1$ is a multiple of $s$, bundles of value $x$ contain at least $2s - 1$ light goods and bundles of value $x + \sfrac{1}{2}$ contain at least $\floor{s}$ light goods. Consider alternating paths starting with an $A$-edge at agents in $A_1$. If there is a path to an agent $j$ of value $x$ augmenting the path to $A$ and moving $\floor{s}$ goods from $j$ to $i$ changes the weight of $i$ and $j$ to $x + \sfrac{1}{2}$ and improves the $\NSW$ of $A$. If there is a path to an agent $j$ of value $x + \sfrac{1}{2}$ owning more than $\floor{s}$ light goods\footnote{$j$ then owns at least $2s + \floor{s}$ light goods.} augmenting the path and moving $\floor{s}$ light goods from $j$ to $i$ interchanges the values of $i$ and $j$ and creates a bundle of value $x + 1$ containing a light good. If neither kind of path exists, all alternating paths starting with an $A$-edge from an agent in $A_1$ either lead to an agent in $A_1$  or to an agent in $A_{\sfrac{1}{2}}$ owning $\floor{s}$ light goods. Let $U$ be this set of agents. All heavy goods owned agents in $U$ are light for agents not in $U$ and $\floor{s}$ light goods must be allocated to every agent in $U \cap A_{\sfrac{1}{2}}$. The bundles owned by the agents in $\bar{U}$ have value $x$ and $x + \sfrac{1}{2}$, and the maximal number of heavy and light goods are assigned to the agents in $\bar{U}$. So the allocation is optimal. \end{proof}

For $d \in \sset{0,\sfrac{1}{2},1}$, let $N_d = \set{\ell}{\text{$\ell \le r_d$ and $\ell$ and $r_d$ have the same parity}}$ be the set of permissible numbers of heavy goods in bundles of value $x + d$.

\begin{lemma}[Lemma 27 in~\cite{NSW-twovalues-halfinteger}]\label{parity matching} Let $A$ be an allocation with all values in $\{x, x + \sfrac{1}{2}, x + 1\}$. $A$ is sub-optimal if and only if there is an allocation $C^H$ of the heavy goods in $A$ and a pair of agents $i$ and $j$ in $A_0 \cup A_1$ such that 
\begin{itemize}
    \item all agents in $A_{\sfrac{1}{2}} \cup \{i,j\}$ own a number of heavy goods in $N_{1/2}$ and for each of the agents in $A_0 \cup A_1 \setminus \{i,j\}$, the number of owned heavy goods is in the same $N$-set as in $A$, and, 
    \item if $i$ and $j$ own bundles of value $x$ in $A$, there must be a bundle of value $x + 1$ in $A$ containing a light good and if $i$ and $j$ own bundles of value $x + 1$ in $A$, there must be a bundle of value $x$ in $A$. 
    \end{itemize}
    \end{lemma}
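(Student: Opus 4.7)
\emph{Forward direction.} Suppose $A$ is sub-optimal and let $O$ be optimal and closest to $A$. By Lemma~\ref{less-more}, $|O_{\sfrac{1}{2}}| > |A_{\sfrac{1}{2}}|$, so the walk decomposition of $A^H \oplus O^H$ (as in the proof of Theorem~1) contains a walk $W$ with an endpoint $i \in (A_0 \cup A_1) \cap O_{\sfrac{1}{2}}$. The case analysis already performed there forbids the other endpoint from lying in $(O_0 \cup O_1) \cap A_{\sfrac{1}{2}}$, so $j \in A_0 \cup A_1$ as well. Set $C^H := A^H \oplus W$. For every $\ell \notin \{i,j\}$, the two edges of $W$ at $\ell$ have either different types (through-node) or the same type (hinge), so $|C^H_\ell|$ and $|A^H_\ell|$ have equal parity; at $i,j$ the parity flips. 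The upper bound $|C^H_\ell| \le r_{\sfrac{1}{2}}$ at hinges is the content of Lemma~\ref{lights in hinges}, and at endpoints follows from Lemma~\ref{xo} (the $O$-endpoint's $O$-bundle value lies in $\{x - \sfrac{1}{2}, x + \sfrac{1}{2}\}$). The two conditional clauses encode conservation of total value under the intended augmentation: if $w^A_i = w^A_j = x$, raising both to $x + \sfrac{1}{2}$ adds $1$ to the total, which must be released by some bundle of value $x+1$ giving up a light good (without changing its heavy count); the case $w^A_i = w^A_j = x + 1$ is symmetric and needs only the existence of an $x$-bundle to absorb a light good.

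\emph{Backward direction.} Given $C^H$ and $i,j$, define target values $\tw_i = \tw_j = x + \sfrac{1}{2}$ and $\tw_\ell = w^A_\ell$ otherwise, with one exception: if $w^A_i = w^A_j = x$, pick $k \in A_1$ whose bundle contains a light good and set $\tw_k = x$; if $w^A_i = w^A_j = x+1$, pick any $k \in A_0$ and set $\tw_k = x+1$. A direct calculation shows $\sum_\ell \tw_\ell$ equals the total value of the bundles in $A$, and $\tw_\ell - |C^H_\ell| s \in \NN_0$ for every $\ell$: for $\ell \in A_{\sfrac{1}{2}} \cup \{i,j\}$ because $|C^H_\ell| \in N_{\sfrac{1}{2}}$; for $\ell \in A_0 \cup A_1 \setminus \{i,j,k\}$ because $|C^H_\ell|$ lies in the same $N$-set as $|A^H_\ell|$ by hypothesis; and for $\ell = k$ because the ``contains a light good'' clause (or its symmetric counterpart) ensures $|A^H_k| s \le \tw_k$. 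Lemma~\ref{global accounting} then extends $C^H$ to a feasible full allocation $A'$ matching these targets. Since $A'$ has two more bundles of value $x + \sfrac{1}{2}$ than $A$, equation~(\ref{large mhalf is good}) yields $\NSW(A') > \NSW(A)$, so $A$ is sub-optimal.

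\emph{Main obstacle.} The substantive work is the forward direction, and it has essentially been carried out already in the proof of Theorem~1: the walk decomposition together with Lemmas~\ref{x0}, \ref{containsalight}, and \ref{lights in hinges} produces an improving walk $W$ whose augmentation lands in the correct $N$-sets and whose value-conservation forces exactly the auxiliary bundles named in the statement. Lemma~27 then repackages that material as a parity-matching criterion tailored to the algorithmic section, while the backward direction is a clean application of Lemma~\ref{global accounting}. The only subtlety is keeping the $N$-set bookkeeping clean across through-nodes, hinges, and endpoints simultaneously; this is why the lemma explicitly distinguishes the single pair $\{i,j\}$ whose parity is permitted to change from the remaining agents of $A_0 \cup A_1$ whose $N$-set is preserved.
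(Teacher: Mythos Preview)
Your proposal follows essentially the same approach as the paper: the forward direction invokes the walk decomposition and case analysis of Section~\ref{APC exists} to obtain $C^H = A^H \oplus W$ from an improving walk, and the backward direction sets up target values and appeals to Lemma~\ref{global accounting}, exactly as the paper does. Two small slips worth fixing: the $O$-endpoint $j$ need not have $O$-value in $\{x-\sfrac{1}{2},x+\sfrac{1}{2}\}$ (when $j \in O_0 \cup O_1$ the paper instead uses parity to get $\hdeg_A(j)+2 \le \hdeg_O(j) \le r_1$, hence $\hdeg_A(j)+1 \le r_{\sfrac{1}{2}}$), and in the backward direction you wrote $|A^H_k|$ where $|C^H_k|$ is needed.
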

\begin{proof} If $A$ is sub-optimal there is an improving walk $W$.  Let $i$ and $j$ be the endpoints of the walk. Augmenting the walk and moving the light goods around as described in Section~\ref{APC exists}
  \begin{itemize}
  \item adds $i$ and $j$ to $A_{1/2}$,
  \item reduces the value of a bundle of value $x + 1$ containing a light good to $x$ if $A_i$ and $A_j$ have value $x$ and increases the value of a bundle of value $x$ to $x + 1$ if $A_i$ and $A_j$ have value $x + 1$, and 
  \item leaves the value of all other bundles unchanged.
  \end{itemize}
 Thus, in the new allocation, the number of heavy goods owned by $i$ and $j$ lies in $N_{\sfrac{1}{2}}$. For all other agents the number of owned heavy goods stays in the same $N$-set. This proves the only-if direction. 

We turn to the if-direction. Assume that there is an allocation $C^H$ of the heavy goods in which for two additional agents $i$ and $j$ the number of owned heavy goods lies in $N_{1/2}$ and for all other agents the number of owned heavy goods stays in the same $N$-set. We will show how to allocate the light goods such that $C^H$ becomes an allocation $C$, in which all bundles have value in $\{x, x + \sfrac{1}{2}, x + 1\}$ and $C_{\sfrac{1}{2}} = A_{\sfrac{1}{2}} \cup \{i,j\}$. Then the $\NSW$ of $C$ is higher than the one of $A$.

We next define the values of the bundles in $C$ and in this way fix the number of light goods that are required for each bundle. For $i$ and $j$, we define $\tw^B_i = \tw^B_j = x + \sfrac{1}{2}$. If $A_i$ and $A_j$ have both value $x$, let $k$ be an agent owning a bundle of value $x +1$ containing a light good and define $w_k^C = x$. If $A_i$ and $A_j$ have both value $x + 1$, let $k$ be an agent owning a bundle of value $x$ and define $w_k^C = x + 1$.  Then $w_i^A + w_j^A + w_k^A = w_i^C + w_j^C + w_k^C$ in both cases. If one of $A_i$ and $A_j$ has value $x$ and the other one has value $x + 1$, let $k$ be undefined. Then $w_i^A + w_j^A = w_i^C + w_j^C$. For all $\ell$ different from $i$, $j$, and $k$, let $w_\ell^A = w_\ell^C$. Then the total value of the bundles in $A$ and $C$ is the same. The total heavy value is also the same. Also, for each agents the heavy value is at most the target value. So the total difference between the target values and the heavy values is exactly the same in $C$ as in $A$. Hence $C$ exists. \end{proof}

In a \emph{parity matching problem} in a bipartite graph a maximum degree is specified for every vertex. The question is whether there is a subset of the edges such that the degree of every vertex is at most its maximum and has the same parity as its maximum. Parity matching problems can be solved in polynomial time~\cite{CornuejolsFactors,Akiyama-Kano}.

\begin{lemma}\label{running time} An optimal allocation for $N_s$ can be computed in polynomial time. The $\NSW$-allocation for half-integer instances can be computed in polynomial time. 
\end{lemma}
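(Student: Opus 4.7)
The plan is to turn the structural characterization in Lemma~\ref{parity matching} into an algorithm by reducing each optimality test to a parity matching problem. Starting from the greedy allocation $B$ restricted to $N_s$, which already has all values in $\sset{x, x + \sfrac{1}{2}, x + 1}$ and hence satisfies condition~(2), I would repeatedly search for an improving move as follows. For every pair $(i,j) \subseteq A_0 \cup A_1$ satisfying the side condition of Lemma~\ref{parity matching} (an $x+1$ bundle containing a light good exists when both $A_i$ and $A_j$ have value $x$; an $x$ bundle exists when both have value $x + 1$), build a bipartite graph on the agents of $N_s$ and the heavy goods of $\Gamma(N_s)$, where each agent $a$ is assigned a degree upper bound $r_d$ and a parity (namely the parity of $r_d$) determined by the $N$-set associated with its target value: agents in $\Ahalf \cup \sset{i,j}$ use the constraints defining $N_{\sfrac{1}{2}}$, all other agents keep the $N$-set they already satisfy in $A$ (the constants $r_d$ come from Lemma~\ref{max number of heavy}). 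By Lemma~\ref{parity matching}, $A$ is sub-optimal for~(3) if and only if this parity-matching instance is feasible for some such pair.

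Parity matching in bipartite graphs with per-vertex degree upper bounds and parity requirements is polynomial-time solvable~\cite{CornuejolsFactors,Akiyama-Kano}. Trying $O(n_s^2)$ pairs and solving one parity matching per pair therefore yields a polynomial-time suboptimality test. When a feasible $C^H$ is returned, I apply it and use Lemma~\ref{global accounting} to refill the light goods so that every bundle reaches its prescribed target value in $\sset{x, x + \sfrac{1}{2}, x + 1}$. Each successful improvement strictly increases the number of bundles of value $x + \sfrac{1}{2}$ by exactly two (at most one other bundle swaps between values $x$ and $x + 1$), so the loop terminates after at most $\lceil n_s/2 \rceil$ iterations and condition~(3) is met. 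Condition~(4) is then enforced by an analogous, simpler search that looks for an alternating-path or parity-matching move preserving the entire utility profile but converting some heavy-only $x + 1$ bundle into one that contains a light good; by the same black-box argument this is polynomial.

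For the overall $\NSW$-allocation in the half-integer case, I chain three polynomial phases: (i) compute the lexmin allocation of the heavy goods by alternating-path augmentations as described in Section~2.1 (or via the parametric max-flow construction in the footnote); (ii) add the light goods greedily in a single pass, always placing the next good into a bundle of minimum current weight; (iii) optimize the small bundles as described above. Composing the three phases gives the claimed polynomial-time algorithm.

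The main obstacle is step~(iii): one has to encode the constraints of Lemma~\ref{parity matching} faithfully as a parity-matching instance, with degree upper bounds and parities read off from Lemma~\ref{max number of heavy}, and the side condition on the auxiliary bundle $k$ must either be absorbed into the instance (by fixing the target value of a third agent) or verified post hoc by casing on whether $A_i$ and $A_j$ have equal or different values. Once this encoding is in place the polynomial-time parity-matching algorithms of~\cite{CornuejolsFactors,Akiyama-Kano} apply as a black box and everything else is straightforward bookkeeping.
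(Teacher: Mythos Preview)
Your approach is essentially the paper's: iterate Lemma~\ref{parity matching} by solving $O(n_s^2)$ parity-matching instances per round, with at most $n_s/2$ rounds, and chain this with the lexmin and greedy phases. The paper differs in two places worth flagging.

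First, before invoking Lemma~\ref{parity matching} the paper inserts an explicit preprocessing phase (Lemmas~\ref{easy case} and~\ref{create $x + 1$ containing a light}): when every bundle in $A_1$ is heavy-only, it uses plain alternating paths either to improve $A$ directly or to manufacture a non-heavy-only $x+1$ bundle, and only then enters the parity-matching loop. You instead skip $(x,x)$ pairs whose side condition fails and rely on the iff of Lemma~\ref{parity matching} as a black box. That is fine if the lemma is taken as stated, but note that its only-if direction is proved via the walk analysis of Section~\ref{APC exists}, which itself assumes a non-heavy-only $x+1$ bundle is available (Lemma~\ref{containsalight}); the paper's preprocessing makes the argument self-contained rather than circular.

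Second, in the paper's parity-matching instance the auxiliary agent $k$ (the bundle that swaps between values $x$ and $x+1$) is given degree constraint $N_0$ regardless of its current value; the footnote explicitly warns that using $N_1$ there would be wrong because $k$ must end up holding a light good. Your description ``all other agents keep the $N$-set they already satisfy in $A$'' would mis-encode $k$ when $r_1 = r_0 + 2$ (the case $x+1 \in s\mathbb{Z}$), though you do acknowledge the handling of $k$ as the main obstacle still to be pinned down.
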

\begin{proof} Let $A$ be an allocation for $N_s$. All bundles have values in $\sset{x, x + \sfrac{1}{2}, x + 1}$. If there are no bundles of value $x$ or no bundles of value $x + 1$, $A$ is optimal (Lemma~\ref{easy case}).

  If all bundles of value $x + 1$ are heavy-only, we search for an alternating path starting with an $A$-edge from an agent in $A_1$ and ending either in an agent in $A_0$ or in an agent in $\Ahalf$ owning more than $\floor{s}$ light goods. If such a path exists, we either improve $A$ or create a bundle of value $x + 1$ containing a light good. We repeat until we create a bundle of value $x +1$ containing a light good. If we do not succeed, $A$ is optimal (Lemma~\ref{create $x + 1$ containing a light}).

So assume now that we have a bundle of value $x$ and a bundle of value $x + 1$ containing a light. We then check whether $A$ can be improved according to Lemma~\ref{parity matching}. 
In order to check for the existence of the allocation $C^H$, we set up the following parity matching problem for every pair $i$ and $j$ of agents. 
\begin{itemize}
\item For goods the degree in the matching must be equal to 1.
\item For all agents in $A_{\sfrac{1}{2}} \cup \{i,j\}$, the degree must be in $N_{\sfrac{1}{2}}$.
\item If $A_i$ and $A_j$ have value $x$, let $A_k$ be any bundle of value $x + 1$ containing a light good. If $A_i$ and $A_j$ have value $x + 1$, let $A_k$ be a bundle of value $x$. The degree of $k$ must be in $N_0$.\footnote{It would be incorrect to require that the degree of $k$ must be in $N_1$ because we want to allocate a light good to $k$.}
\item For an $a \in A_0\setminus \{i,j,k\}$, the degree must be in $N_0$, and for an $a \in A_1 \setminus \{i,j,k\}$, the degree must be in $N_1$. 
\end{itemize}
If $C^H$ exists for some pair $i$ and $j$, we improve the allocation. If $C^H$ does not exist for any pair $i$ and $j$, $A$ is optimal.

Each improvement increases the size of $A_{\sfrac{1}{2}}$ by two and hence there can be at most $n/2$ improvements. In order to check for an improvement, we need to solve $n^2$ parity matching problems. Parity matching problems can be solved in polynomial time.
\end{proof}

%We conjecture that this time bound can be improved. 

\bibliographystyle{alpha}
\bibliography{ref}

\end{document}